\newtheorem{theorem}{Theorem}
\newtheorem{example}[theorem]{Example}
\newtheorem{lemma}[theorem]{Lemma}
\newenvironment{proof}[1][Proof]{\noindent\textbf{#1.} }{\ }
\newenvironment{keywords}[1][Keywords]{\noindent\textbf{#1:} }{}
\renewcommand\appendix{\par
\setcounter{section}{0}%
\setcounter{subsection}{0}%
\setcounter{table}{0}
\setcounter{figure}{0}
\gdef\thetable{\Alph{table}}
\gdef\thefigure{\Alph{figure}}
\section*{Appendix}
\gdef\thesection{\Alph{section}}
\setcounter{section}{1}}
\begin{document}

\title{Structures and Transformations for Model Reduction of Linear Quantum Stochastic Systems\footnote{Research supported by the Australian Research Council}}
\author{Hendra~I.~Nurdin 
\thanks{Hendra I. Nurdin is with the School of Electrical Engineering and Telecommunications, The University of New South Wales (UNSW), 
Sydney NSW 2052, Australia. Email: h.nurdin@unsw.edu.au}}
\date{}

\maketitle \thispagestyle{empty}

\begin{abstract}
The purpose of this paper is to develop a model reduction theory for linear quantum stochastic systems that are commonly encountered in quantum optics and related fields, modeling devices such as optical cavities and optical parametric amplifiers, as well as quantum networks composed of such devices. Results are derived on subsystem truncation of such systems and it is shown that this truncation preserves the physical realizability property of linear quantum stochastic systems. It is also shown that the property of complete passivity of linear quantum stochastic systems is preserved under subsystem truncation. A necessary and sufficient condition for the existence of a balanced realization of a linear quantum stochastic system under sympletic transformations is derived. Such a condition turns out to be very restrictive and will not be satisfied by generic linear quantum stochastic systems, thus necessary and sufficient conditions for relaxed notions of simultaneous diagonalization of the controllability and observability Gramians of  linear quantum stochastic systems under symplectic transformations are also obtained. The notion of a quasi-balanced realization is introduced and it is shown that all asymptotically stable completely passive linear quantum stochastic systems have a quasi-balanced realization. Moreover, an explicit bound for the subsystem truncation error on a quasi-balanceable linear quantum stochastic system  is provided. The results are applied in an example of model reduction in the context of low-pass optical filtering of coherent light using a network of optical cavities.
\end{abstract}

\begin{keywords}
Linear quantum stochastic systems, model reduction, symplectic transformations, quantum optical systems, open Markov quantum systems
\end{keywords}

\section{Introduction}
\label{sec:intro}

The class of linear quantum stochastic systems \cite{BE08,JNP06,NJD08,GJN10} represents multiple distinct open quantum harmonic oscillators that are coupled linearly to one another and also to external Gaussian fields, e.g., coherent laser beams, and whose dynamics can be conveniently and completely summarized in the Heisenberg picture of quantum mechanics in terms of a quartet of matrices $A,B,C,D$, analogous to those used in modern control theory for linear systems. As such, they can  be viewed as a quantum analogue of classical linear stochastic systems and are encountered in practice, for instance, as models for optical parametric amplifiers \cite[Chapters 7 and 10]{GZ04}. However, due to the constraints imposed by quantum mechanics, the matrices $A,B,C,D$ in a linear quantum stochastic system cannot be arbitrary, a restriction not encountered in the classical setting. In fact, as derived in \cite{JNP06} for the case where $D$ is of the form $D=[\begin{array}{cc} I & 0 \end{array}]$, with $I$ denoting an identity matrix, it is required that $A$ and $B$ satisfy a certain non-linear equality constraint, and $B$ and $C$ satisfy a linear equality constraint. These constraints on the $A,B,C,D$ matrices are referred to as {\em physical realizability} constraints \cite{JNP06}. Due to the analogy with classical linear stochastic systems, linear quantum stochastic systems provide a particularly tractable class of quantum systems with which to discover and develop fundamental ideas and principles of quantum control, just as classical linear systems played a fundamental role in  the early development of systems and control theory. 

In control problems involving linear quantums stochastic systems such as $H^{\infty}$ control \cite{JNP06} and LQG control \cite{NJP07b}, the important feature of the controller is its transfer function rather than the systems matrices $(A,B,C,D)$. The controller may have many degrees of freedom, which may make it challenging to realize. Therefore it is of interest to have a method to  construct an approximate controller with a smaller number of degrees of freedom whose transfer function approximates that of the full controller. In systems and control theory, this procedure is known as model reduction and is an important part of  a controller design process, see, e.g. \cite{ZDG95}. 

Model reduction methods for linear quantum stochastic systems have been limited to singular perturbation techniques \cite{BvHS07,GNW10,Pet10} and an eigenvalue truncation technique that is restricted to a certain sub-class of completely passive linear quantum stochastic systems \cite{Pet12}. These methods cannot be applied to general linear quantum stochastic systems and the current paper contributes towards filling this important gap  by developing new results on subsystem truncation for general linear quantum stochastic systems. Moreover, the paper studies the feasibility of performing model reduction by balanced truncation for linear quantum stochastic systems and derives a necessary and sufficient condition  under which it can be carried out. It is shown that  balanced truncation is {\em not} possible for generic linear quantum stochastic systems. Therefore, this paper also considers other realizations in which the system controllability and observability Gramians are simultaneously diagonal, and introduces one such realization which is referred to as a quasi-balanced realization. The results are illustrated in an example that demonstrates an instance where quasi-balanced truncation can be applied.   

\section{Preliminaries}
\label{sec:preliminaries}

\subsection{Notation}
We will use the following notation: $\imath=\sqrt{-1}$, $^*$ denotes the adjoint of a linear operator
as well as the conjugate of a complex number. If $A=[a_{jk}]$ then $A^{\#}=[a_{jk}^*]$, and $A^{\dag}=(A^{\#})^{\top}$, where $(\cdot)^{\top}$ denotes matrix transposition.  $\Re\{A\}=(A+A^{\#})/2$ and $\Im\{A\}=\frac{1}{2\imath}(A-A^{\#})$.
We denote the identity matrix by $I$ whenever its size can be
inferred from context and use $I_{n}$ to denote an $n \times n$
identity matrix. Similarly, $0_{m \times n}$ denotes  a $m \times n$ matrix with zero
entries but drop the subscript when its dimension  can be determined from context. We use
${\rm diag}(M_1,M_2,\ldots,M_n)$  to denote a block diagonal matrix with
square matrices $M_1,M_2,\ldots,M_n$ on its diagonal, and ${\rm diag}_{n}(M)$
denotes a block diagonal matrix with the
square matrix $M$ appearing on its diagonal blocks $n$ times. Also, we
will let $\mathbb{J}=\left[\begin{array}{rr}0 & 1\\-1&0\end{array}\right]$ and $\mathbb{J}_n=I_{n} \otimes \mathbb{J}={\rm diag}_n(\mathbb{J})$.

\subsection{The class of linear quantum stochastic systems}
\label{sec:linear-summary} 

Let $x=(q_1,p_1,q_2,p_2,\ldots,q_n,p_n)^T$ denote a vector of  the canonical position and
momentum operators of a {\em many degrees of freedom quantum
harmonic oscillator} satisfying the canonical commutation
relations (CCR)  $xx^T-(xx^T)^T=2\imath \mathbb{J}_n$. A {\em linear quantum stochastic
system} \cite{JNP06,NJP07b,NJD08} $G$ is a quantum system defined by three {\em parameters}:
(i) A quadratic Hamiltonian $H=\frac{1}{2} x^T R x$ with $R=R^T \in
\mathbb{R}^{n \times n}$, (ii) a coupling operator $L=Kx$, where $K$ is
an $m \times 2n$ complex matrix, and (iii) a unitary $m \times m$
scattering matrix $S$. For shorthand, we write $G=(S,L,H)$ or $G=(S,Kx,\frac{1}{2} x^TRx)$. The
time evolution $x(t)$ of $x$ in the Heisenberg picture ($ t
\geq 0$) is given by the quantum stochastic differential equation
(QSDE) (see \cite{BE08,JNP06,NJD08}):
\begin{align}
dx(t) &=  A_0x(t)dt+ B_0\left[\small \begin{array}{c} d\mathcal{A}(t)
\\ d\mathcal{A}(t)^{\#} \end{array} \normalsize \right];  
x(0)=x, \notag\\
dY(t) &=  C_0 x(t)dt+  D_0 d\mathcal{A}(t), \label{eq:qsde-out}
\end{align}
with $A_0=2\mathbb{J}_n(R+\Im\{K^{\dag}K\})$, $B_0=2\imath \mathbb{J}_n [\begin{array}{cc}
-K^{\dag}S & K^TS^{\#}\end{array}]$,
$C_0=K$, and $D_0=S$. Here $Y(t)=(Y_1(t),\ldots,Y_m(t))^{\top}$ is a vector of
continuous-mode bosonic {\em output fields} that results from the interaction of the quantum
harmonic oscillators and the incoming continuous-mode bosonic quantum fields in the $m$-dimensional vector
$\mathcal{A}(t)$. Note that the dynamics of $x(t)$ is linear, and
$Y(t)$ depends linearly on $x(s)$, $0 \leq s \leq t$. We refer to $n$ as the {\em
  degrees of freedom} of the system or, more simply, the {\em degree} of the system.

Following \cite{JNP06}, it will be  convenient to write the dynamics in quadrature form as
\begin{align}
dx(t)&=Ax(t)dt+Bdw(t);\, x(0)=x. \nonumber\\
dy(t)&= C x(t)dt+ D dw(t), \label{eq:qsde-out-quad}
\end{align}
with
\begin{eqnarray*}
w(t)&=&2(\Re\{\mathcal{A}_1(t)\},\Im\{\mathcal{A}_1(t)\},\Re\{\mathcal{A}_2(t)\},\Im\{\mathcal{A}_2(t)\},\ldots,\Re\{\mathcal{A}_m(t)\},\Im\{\mathcal{A}_m(t)\})^{\top}; \\
y(t)&=&2(\Re\{Y_1(t)\},\Im\{Y_1(t)\},\Re\{Y_2(t)\},\Im\{Y_2(t)\}, \ldots,\Re\{Y_m(t)\},\Im\{Y_m(t)\})^{\top}.
\end{eqnarray*}
The real matrices $A,B,C,D$ are in a one-to-one correspondence
with $A_0,B_0,C_0,D_0$. Also, $w(t)$ is taken to be in a vacuum state where it
satisfies the It\^{o} relationship $dw(t)dw(t)^{\top} = (I+\imath \mathbb{J}_m)dt$; see \cite{JNP06}. Note that in this form it follows that $D$ is a real unitary symplectic matrix. That is, it is both unitary (i.e., $DD^{\top}=D^{\top} D=I$) and symplectic (a real $m \times m$ matrix is symplectic if $D \mathbb{J}_m D^{\top} =\mathbb{J}_m$). However, in the most general case, $D$ can be generalized to a symplectic matrix that represents a quantum network that includes ideal squeezing devices acting on the incoming field $w(t)$ before interacting with the system \cite{GJN10,NJD08}.  The matrices $A$, $B$, $C$, $D$ of a linear quantum stochastic system cannot be arbitrary and are not independent of one another. In fact, for the system to be physically realizable \cite{JNP06,NJP07b,NJD08}, meaning it represents a meaningful physical system, they must satisfy the constraints (see \cite{WNZJ12,JNP06,NJP07b,NJD08,GJN10})
\begin{eqnarray}
&&A\mathbb{J}_n + \mathbb{J}_n A^{\top} + B\mathbb{J}_mB^{\top}=0, \label{eq:pr-1}\\
&& \mathbb{J}_n  C^{\top} +B\mathbb{J}_mD^{\top}=0, \label{eq:pr-2}\\
&&D \mathbb{J}_m D^{\top} =  \mathbb{J}_{m}. \label{eq:pr-3}
\end{eqnarray}

The above are the physical realizability constraints for systems for which the (even) dimension of the output $y(t)$ is the same as that of the input $w(t)$, i.e., $n_y=2m$. However, for the purposes of the model reduction theory to be developed in this paper, it is pertinent to consider the case where $y(t)$ has an even dimension possibly less than $w(t)$. The reason for this and the physical realizability constraints for systems with less outputs and inputs are given in the next section. 

Following \cite{GJ07}, we denote a  linear quantum stochastic system having an equal number of inputs and outputs, and  Hamiltonian $H$,  coupling vector $L$, and scattering matrix $S$, simply as $G=(S, L,H)$ or $G=(S,Kx,\frac{1}{2} x^{\top}Rx)$. We also recall  the {\em concatenation product} $\boxplus$ and {\em series product} $\triangleleft$  for open Markov quantum systems \cite{GJ07} defined by $G_1 \boxplus G_2=({\rm diag}(S_1,S_2),(L_1^{\top},L_2^{\top})^{\top},H_1+H_2)$, and $G_2 \triangleleft G_1=(S_2S_1, L_2+S_2   L_1,H_1+H_2+\Im\{L_2^{\dag}S_2 L_1\})$. Since both products are associative, the products $G_1
\boxplus G_2 \boxplus \ldots \boxplus G_n$ and $G_n \triangleleft G_{n-1} \triangleleft \ldots \triangleleft G_1$ are unambiguously defined. 


\subsection{Linear quantum stochastic systems with less outputs than inputs}
\label{sec:pr-less-out}
In general one may not be interested in all outputs of the system but only in a subset of them, see, e.g., \cite{JNP06}. That is, one is often only interested in certain pairs of the output field quadratures in $y(t)$. Thus, in the most general scenario, $y(t)$ can have an even dimension $n_y <2m$ and $D$ is a $n_y \times 2m$ matrix satisfying $D \mathbb{J}_m  D^{\top}=\mathbb{J}_{n_y/2}$. Thus, more generally we can consider outputs $y(t)$ of form
\begin{equation}
y(t) = C x(t) + D w(t), \label{eq:y-e}
\end{equation}
with $C \in \mathbb{R}^{n_y \times  2n}$, $D \in \mathbb{R}^{n_y \times 2m}$ with $n_y$ even and $n_y < 2m$. In this case, generalizing the notion developed in \cite{JNP06}, we say that a linear quantum stochastic system with output (\ref{eq:y-e}) is physically realizable if and only if there exists matrices $C' \in \mathbb{R}^{(2m-n_y) \times 2n}$ and $D' \in \mathbb{R}^{(2m-n_y) \times 2m}$ such that the system 
\begin{align}
dx(t)&=Ax(t)dt+Bdw(t);\, x(0)=x. \nonumber\\
dy'(t)&= \left[\begin{array}{c} C \\ C' \end{array}\right]x(t)dt+ \left[\begin{array}{c} D \\ D' \end{array}\right]dw(t), \label{eq:qsde-out-quad-e}
\end{align}
is a physically realizable linear quantum stochastic system with the same number of inputs and outputs. That is, the matrices $A$, $B$, $[\begin{array}{cc} C^{\top} & (C')^{\top} \end{array}]^{\top}$, and $[\begin{array}{cc} D^{\top} & (D')^{\top} \end{array}]^{\top}$ satisfy the constraints (\ref{eq:pr-1})-(\ref{eq:pr-3}) when $C$ and $D$ in (\ref{eq:pr-2}) and (\ref{eq:pr-3}) are replaced by $[\begin{array}{cc} C^{\top} & (C')^{\top} \end{array}]^{\top}$ and $[\begin{array}{cc} D^{\top} & (D')^{\top} \end{array}]^{\top}$, respectively. A necessary and sufficient condition for physical realizability of general linear quantum stochastic systems is the following \cite{WNZJ12}:
\begin{theorem}
A linear quantum stochastic system with less outputs than inputs is physically realizable if and only if
\begin{eqnarray}
&&A\mathbb{J}_n + \mathbb{J}_n A^{\top} + B\mathbb{J}_mB^{\top}=0, \label{eq:pr-1e}\\
&& \mathbb{J}_n  C^{\top} +B\mathbb{J}_mD^{\top}=0, \label{eq:pr-2e}\\
&&D \mathbb{J}_m D^{\top} =  \mathbb{J}_{n_y/2}. \label{eq:pr-3e}
\end{eqnarray}
\end{theorem}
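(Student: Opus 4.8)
The plan is to prove the equivalence by treating the two directions separately; the forward implication is essentially bookkeeping, while the converse hinges on a symplectic completion. For necessity, suppose the system with output~(\ref{eq:y-e}) is physically realizable. By definition there exist $C'$ and $D'$ such that, writing $\tilde{C}:=\left[\begin{array}{c} C \\ C'\end{array}\right]$ and $\tilde{D}:=\left[\begin{array}{c} D \\ D'\end{array}\right]$, the augmented system~(\ref{eq:qsde-out-quad-e}) satisfies (\ref{eq:pr-1})--(\ref{eq:pr-3}) with $C$ and $D$ replaced by $\tilde{C}$ and $\tilde{D}$. Constraint~(\ref{eq:pr-1}) involves neither $C$ nor $D$, so it is literally~(\ref{eq:pr-1e}). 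Expanding $\mathbb{J}_n\tilde{C}^{\top}+B\mathbb{J}_m\tilde{D}^{\top}=0$ and reading off its first $n_y$ columns gives $\mathbb{J}_nC^{\top}+B\mathbb{J}_mD^{\top}=0$, i.e.\ (\ref{eq:pr-2e}). Finally, the leading $n_y\times n_y$ principal block of $\tilde{D}\mathbb{J}_m\tilde{D}^{\top}=\mathbb{J}_m$ is $D\mathbb{J}_mD^{\top}$ on the left and, since $\mathbb{J}_m={\rm diag}_m(\mathbb{J})$ is block diagonal and $n_y$ is even, equals $\mathbb{J}_{n_y/2}$ on the right; this is~(\ref{eq:pr-3e}).

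For sufficiency, assume (\ref{eq:pr-1e})--(\ref{eq:pr-3e}); I must produce $C'\in\mathbb{R}^{(2m-n_y)\times 2n}$ and $D'\in\mathbb{R}^{(2m-n_y)\times 2m}$ making $(A,B,\tilde{C},\tilde{D})$ physically realizable. The key claim is that $D$ can be completed to a square symplectic matrix. Since $\mathbb{J}_{n_y/2}$ is nonsingular, (\ref{eq:pr-3e}) forces $D$ to have full row rank $n_y$; and grouping the rows of $D$ into consecutive pairs, (\ref{eq:pr-3e}) says exactly that these rows form $n_y/2$ mutually skew-orthogonal hyperbolic pairs for the symplectic form $(u,v)\mapsto u\mathbb{J}_mv^{\top}$ on $\mathbb{R}^{2m}$, i.e.\ a partial symplectic basis. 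Equivalently, the row space $W$ of $D$ is a symplectic subspace of $(\mathbb{R}^{2m},\mathbb{J}_m)$, hence $\mathbb{R}^{2m}=W\oplus W^{\perp}$ with the symplectic complement $W^{\perp}$ itself symplectic of (even) dimension $2m-n_y$. Taking any symplectic basis of $W^{\perp}$ as the rows of $D'$ yields $\tilde{D}=\left[\begin{array}{c} D \\ D'\end{array}\right]$ with $\tilde{D}\mathbb{J}_m\tilde{D}^{\top}=\mathbb{J}_m$, which is (\ref{eq:pr-3}) for the augmented system. (This step is the symplectic Gram--Schmidt process, equivalently Witt's extension theorem.)

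With $D'$ fixed, I would set $C':=D'\mathbb{J}_mB^{\top}\mathbb{J}_n$. Using $\mathbb{J}_n^{\top}=\mathbb{J}_n^{-1}=-\mathbb{J}_n$ one checks $\mathbb{J}_n(C')^{\top}+B\mathbb{J}_m(D')^{\top}=0$, and together with~(\ref{eq:pr-2e}) this gives $\mathbb{J}_n\tilde{C}^{\top}+B\mathbb{J}_m\tilde{D}^{\top}=0$, i.e.\ (\ref{eq:pr-2}) for the augmented system; meanwhile (\ref{eq:pr-1}) for the augmented system is the hypothesis~(\ref{eq:pr-1e}) verbatim. Hence $(A,B,\tilde{C},\tilde{D})$ satisfies (\ref{eq:pr-1})--(\ref{eq:pr-3}) and defines a physically realizable linear quantum stochastic system with equally many inputs and outputs, so by the definition in Section~\ref{sec:pr-less-out} the original system with output~(\ref{eq:y-e}) is physically realizable.

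The one substantive step is the symplectic completion of $D$: unlike in the unitary case the rows of $D$ need not be orthonormal, so ordinary orthogonal Gram--Schmidt does not apply, and the real content of~(\ref{eq:pr-3e}) is that it makes the span of the rows a \emph{symplectic} subspace, which is precisely the condition ensuring the splitting $\mathbb{R}^{2m}=W\oplus W^{\perp}$ into symplectic pieces. Everything else --- solving the linear equation for $C'$ and matching~(\ref{eq:pr-1e}) --- is routine; the only care needed is index/ordering bookkeeping (the block-diagonal structure of $\mathbb{J}_m$) when identifying leading principal blocks with $\mathbb{J}_{n_y/2}$ and when extracting the first $n_y$ columns in the necessity direction.
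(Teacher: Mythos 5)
Your proof is correct and follows essentially the same route as the paper: necessity by reading off the relevant blocks of the augmented constraints, and sufficiency by completing $D$ to a square symplectic $\tilde D$ (the paper delegates this to the construction in the cited Lemma 6 of Nurdin 2011, whereas you supply the standard symplectic-complement/Witt-extension argument directly) and then solving for $C'$. Your explicit formula $C'=D'\mathbb{J}_m B^{\top}\mathbb{J}_n$ is the one that actually satisfies $\mathbb{J}_n (C')^{\top}+B\mathbb{J}_m (D')^{\top}=0$; the paper's stated $C'=-D'\mathbb{J}_m B^{\top}$ omits the trailing $\mathbb{J}_n$ factor, which appears to be a typo rather than a substantive difference.
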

A proof of this theorem had to be omitted in \cite{WNZJ12} due to page limitations, so a short independent proof is provided below.

\begin{proof}
The necessity of (\ref{eq:pr-1e})-(\ref{eq:pr-3e}) follows immediately from  the definition  of a physically realizable system with less outputs than inputs (as given previously) and from the physical realizability contraints for systems with the same number of input and outputs. As for the sufficiency, first note that for $D$ satisfying (\ref{eq:pr-3e}), it follows from an analogous  construction to that given in the proof of \cite[Lemma 6]{Nurd11} that a matrix $D' \in \mathbb{R}^{(2m-n_y)\times 2m}$ can be constructed such that the the matrix $\tilde D = [\begin{array}{cc} D^{\top} & (D')^{\top} \end{array}]^{\top}$ is symplectic. Now, define $C' = -D' \mathbb{J}_m B^{\top}$ and $\tilde C=[\begin{array}{cc} C^{\top} & (C')^{\top} \end{array}]^{\top}$. Consider now a system $\tilde G$ with an equal number of inputs and outputs, and system matrices $(A,B,\tilde C, \tilde D)$. From the physical realizability conditions (\ref{eq:pr-1e})-(\ref{eq:pr-3e}) and the definition of $C'$ and $\tilde C$, it follows that $\tilde G$ satisfies (\ref{eq:pr-1e})-(\ref{eq:pr-3e}) and is therefore physically realizable with the same number of inputs and outputs. It now follows from  definition that the original system with output $y'(t)$ of smaller dimension that $w(t)$ is physically realizable. This completes the proof. $\hfill \Box$
\end{proof}

\section{Model reduction of linear quantum stochastic systems by subsystem truncation}
\label{sec:subsys-truncation}
\subsection{Preservation of quantum structural constraints in subsystem truncation}
\label{sec:pr-preserve-truncation}
In this section we show that physically realizable linear quantum stochastic systems possess  the convenient property that any subsystem defined by a collection of arbitrary pairs $(q_j,p_j)$ in $x$ and obtained via a simple truncation procedure inherit the physical realizability property. 

Let $\pi$ be any permutation map on $\{1,2,\ldots,n\}$, i.e., a bijective map of  $\{1,2,\ldots,n\}$ to itself. Let $x_{\pi}=(q_{\pi(1)},
p_{\pi(1)},q_{\pi(2)},p_{\pi(2)},\ldots,q_{\pi(n)},p_{\pi(n)})^{\top}$, and $P$ be the permutation matrix representing this permutation of the elements of $x$, i.e., $Px=x_{\pi}$. Then the permuted system $G_{\pi}$ will have system matrices 
$(A_{\pi},B_{\pi},C_{\pi},D_{\pi})$ with $A_{\pi}=PA P^{\top}$, $B_{\pi}=PB$, $C_{\pi}=CP^{\top}$, $D_{\pi}=D$. Since $G_{\pi}$ involves a mere rearrangement of  the degrees of freedom  $x$ of  $G$, it represents the same  physically realizable system as $G$, up to a reordering of the components of $x$. Thus the system matrices $(A_{\pi},B_{\pi},C_{\pi},D_{\pi})$ of $G_{\pi}$ trivially satisfy the physically realizability constraints (\ref{eq:pr-1})-(\ref{eq:pr-3}).  Partition $x_{\pi}$ as $x_{\pi}=(x_{\pi,1}^{\top},x_{\pi,2}^{\top})^{\top}$ where $x_{\pi,1}=(q_{\pi(1)},p_{\pi(1)},\ldots,q_{\pi(r)},p_{\pi(r)})^{\top}$  and $x_{\pi,2}=(q_{\pi(r+1)},p_{\pi(r+1)},\ldots,q_{\pi(n)},p_{\pi(n)})^{\top}$, with  $r<n$. Partition the matrices $A_{\pi}$, $B_{\pi}$, and $C_{\pi}$ compatibly with the partitioning of $x_{\pi}$ into $x_{\pi,1},x_{\pi,2}$. That is,
\begin{eqnarray}
A_{\pi}&=&\left[\begin{array}{cc} A_{\pi,11} & A_{\pi,12} \\ A_{\pi,21} & A_{\pi,22} \end{array} \right],\; B_{\pi} =\left[\begin{array}{c} B_{\pi,1} \\ B_{\pi,2} \end{array} \right], \label{eq:mat-part-1}\\
C_{\pi}&=&\left[\begin{array}{cc} C_{\pi,1} & C_{\pi,2} \end{array}\right]. \label{eq:mat-part-2}
\end{eqnarray}
From the fact that $A_{\pi}$, $B_{\pi}$, $C_{\pi}$, and $D_{\pi}$ satisfy the physical realizability constraints (\ref{eq:pr-1e})-(\ref{eq:pr-3e}) we immediately obtain for $j=1,2$:
\begin{eqnarray}
&&A_{\pi,jj}\Theta_{j}+\Theta_{j}A_{\pi,jj}^{\top} + B_{\pi,j}\mathbb{J}_m B_{\pi,j}^{\top} =0, \label{eq:part-AB}\\
&& \Theta_j C_{\pi,j}^{\top} +B_{\pi,j}  \mathbb{J}_m D_{\pi}^{\top}=0,\label{eq:part-BC} \\
&& D_{\pi} \mathbb{J}_m D_{\pi}^{\top} = \mathbb{J}_{n_y/2}, \label{eq:part-D}
\end{eqnarray}
where $\Theta_1 =\mathbb{J}_r$ and $\Theta_2=\mathbb{J}_{n-r}$. Therefore, the subsystems $G_{\pi,j}=(A_{\pi,jj},B_{\pi,j},C_{\pi,j},D_{\pi})$ with $x_{\pi,j}$ as canonical internal variable are physically realizable systems in their own right for $j=1,2$. Thus, we can state the following theorem.
\begin{theorem}
\label{thm:pr-sub-systems} For any given permutation map $\pi$ of the indices $\{1,2,\ldots,n\}$ and any partitioning of $x_{\pi}=(q_{\pi(1)},
p_{\pi(1)},q_{\pi(2)},p_{\pi(2)},\ldots,q_{\pi(n)},p_{\pi(n)})^{\top}$ as $x_{\pi}=(x_{\pi,1}^{\top},x_{\pi,2}^{\top})^{\top}$, with $x_{\pi,1}=(q_{\pi(1)},p_{\pi(1)},\ldots,q_{\pi(r)},p_{\pi(r)})^{\top}$  and $x_{\pi,2}=(q_{\pi(r+1)},p_{\pi(r+1)},\ldots,q_{\pi(r)},p_{\pi(r)})^{\top}$, with $r<n$, the subsystems $G_{\pi,j}=(A_{\pi,jj},B_{\pi,j},C_{\pi,j},D_{\pi})$ with canonical  position and momentum operators in $x_{\pi,j}$ are physically realizable for $j=1,2$.
\end{theorem}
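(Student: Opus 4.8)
The plan is essentially to unwind the definitions, since the work has already been set up in the paragraph preceding the theorem statement. I would observe that the theorem is really just a consolidation of the computation carried out in equations~(\ref{eq:mat-part-1})--(\ref{eq:part-D}), so the proof amounts to checking that each step in that chain is legitimate and that the conclusion is exactly the physical realizability of the two subsystems.

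First I would note that the permuted system $G_{\pi}=(A_{\pi},B_{\pi},C_{\pi},D_{\pi})$ with $A_{\pi}=PAP^{\top}$, $B_{\pi}=PB$, $C_{\pi}=CP^{\top}$, $D_{\pi}=D$ is physically realizable: since $P$ is a permutation matrix that rearranges the $(q_j,p_j)$ \emph{pairs}, it is itself symplectic (i.e.\ $P\mathbb{J}_nP^{\top}=\mathbb{J}_n$), and conjugating the realizability constraints~(\ref{eq:pr-1e})--(\ref{eq:pr-3e}) by $P$ shows they are preserved. Concretely, $A_{\pi}\mathbb{J}_n+\mathbb{J}_nA_{\pi}^{\top}+B_{\pi}\mathbb{J}_mB_{\pi}^{\top} = P(A\mathbb{J}_n+\mathbb{J}_nA^{\top}+B\mathbb{J}_mB^{\top})P^{\top}=0$, and similarly for~(\ref{eq:pr-2e}); (\ref{eq:pr-3e}) is untouched since $D_{\pi}=D$. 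Alternatively, one can simply invoke that $G_{\pi}$ represents the same physical system as $G$, as the preceding paragraph does.

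Next I would partition $x_{\pi}$, $A_{\pi}$, $B_{\pi}$, $C_{\pi}$ as in~(\ref{eq:mat-part-1})--(\ref{eq:mat-part-2}) and read off the diagonal blocks. The key point is that because $x_{\pi,1}$ and $x_{\pi,2}$ each consist of \emph{whole} conjugate pairs $(q_{\pi(k)},p_{\pi(k)})$, the CCR matrix $\mathbb{J}_n$ is itself block diagonal with respect to this partition, i.e.\ $\mathbb{J}_n={\rm diag}(\mathbb{J}_r,\mathbb{J}_{n-r})={\rm diag}(\Theta_1,\Theta_2)$. Writing out the $(1,1)$ and $(2,2)$ blocks of~(\ref{eq:pr-1e}) and the $j$-th block-row of~(\ref{eq:pr-2e}), the off-diagonal blocks $A_{\pi,12},A_{\pi,21}$ drop out of the diagonal equations precisely because $\mathbb{J}_n$ has no off-diagonal blocks, yielding exactly~(\ref{eq:part-AB})--(\ref{eq:part-D}) with $\Theta_1=\mathbb{J}_r$, $\Theta_2=\mathbb{J}_{n-r}$. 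Equation~(\ref{eq:part-D}) is just~(\ref{eq:pr-3e}) restated. Finally, since $x_{\pi,j}$ satisfies the CCR $x_{\pi,j}x_{\pi,j}^{\top}-(x_{\pi,j}x_{\pi,j}^{\top})^{\top}=2\imath\Theta_j$ (it being a sub-collection of canonical pairs), equations~(\ref{eq:part-AB})--(\ref{eq:part-D}) are exactly the physical realizability constraints of Theorem~1 applied to the system $G_{\pi,j}=(A_{\pi,jj},B_{\pi,j},C_{\pi,j},D_{\pi})$ with $r$ (resp.\ $n-r$) degrees of freedom. Hence each $G_{\pi,j}$ is physically realizable, which is the claim. $\hfill\Box$

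There is really no serious obstacle here: the only thing to be careful about is the block-diagonality of $\mathbb{J}_n$ under the chosen partition, which is exactly why the theorem insists that $x_{\pi,1}$ be built from the first $r$ \emph{pairs} rather than from an arbitrary set of $2r$ coordinates — an arbitrary coordinate subset would not have a block-diagonal commutation matrix and the diagonal blocks of the constraints would couple to the off-diagonal blocks of $A_{\pi}$. I would also remark in passing that the output matrix $D_{\pi}$ is shared by both subsystems and is unchanged, so the fact that~(\ref{eq:pr-3e}) holds for the full system immediately gives~(\ref{eq:part-D}) for each subsystem; no reduction of the field dimension occurs under subsystem truncation.
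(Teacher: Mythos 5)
Your proof is correct and follows essentially the same route as the paper, which establishes the theorem by the computation in the paragraph preceding it: conjugating the realizability constraints by the (symplectic) permutation matrix and then reading off the diagonal blocks, using that $\mathbb{J}_n={\rm diag}(\mathbb{J}_r,\mathbb{J}_{n-r})$ for a partition into whole conjugate pairs. (The paper also sketches an alternative argument via the cascade network synthesis of \cite{NJD08}, but that is presented as an equivalent viewpoint, not the primary derivation.)
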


From a model reduction perspective, the theorem says that if one truncates a subsystem $x_{\pi,j}$ according to any partitioning of $x_{\pi}$ in which each partition  $x_{\pi,1}$ and $x_{\pi,2}$  contain distinct pairs of conjugate position and momentum quadratures, then the remaining subsystem after the truncation (i.e., $x_{\pi,1}$ if  $x_{\pi,2}$ is truncated, and   $x_{\pi,2}$ if  $x_{\pi,1}$ is truncated) is automatically guaranteed to be a physically realizable linear quantum stochastic system. This is rather fortunate as the physical realizability constraints are quite formidable to deal with (see, e.g., \cite{NJP07b} in the context of coherent-feedback LQG controller design) and at a glance one would initially expect that physically realizable reduced models would not be easily obtained.

\begin{figure}[tbph]
\centering
\includegraphics[scale=0.5]{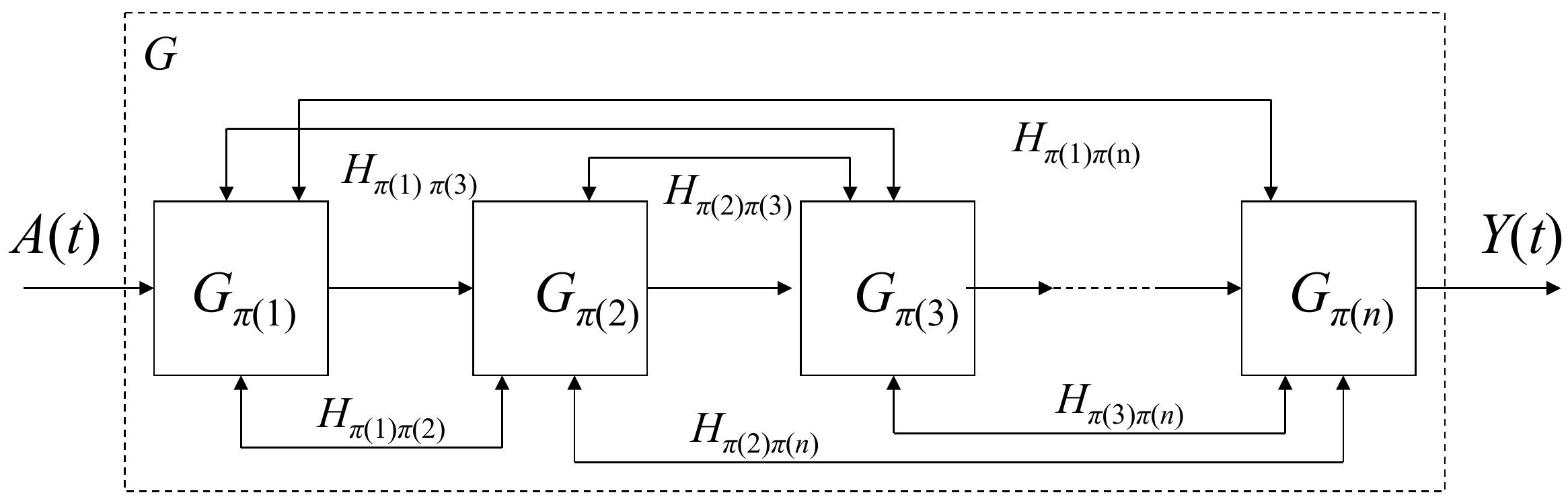}
\caption{Cascade realization of $G_{\pi}$ with direct interaction Hamiltonians $H^d_{\pi(j)\pi(k)}$ between sub-systems $G_{\pi(j)}$ and $G_{\pi(k)}$ for $j,k=1,2,\ldots,n$, following \cite{NJD08}. Illustration is for $n>3$.}
\label{fig:sysdecom}
\end{figure}

An equivalent proof of  the theorem is via the main network synthesis result of \cite{NJD08} -- this viewpoint of Theorem \ref{thm:pr-sub-systems} will be especially useful in the next section. It is shown in \cite[Theorem 5.1]{NJD08} that any (physically realizable) linear quantum stochastic system of degree $n$ such as  $G_{\pi}$  can be decomposed into a cascade or series connection of $n$ one degree of freedom linear quantum stochastic systems $G_{\pi(j)}$, $j=1,2,\ldots,n$ together with a direct quadratic coupling Hamiltonian between (at most) every pair of the $G_{\pi(j)}$'s, see Fig.~\ref{fig:sysdecom}. Here $G_{\pi(j)}$ is a one degree of freedom linear quantum stochastic system with $x_{\pi(j)}=(q_{\pi(j)},p_{\pi(j)})^{\top}$ as its canonical position and momentum operators. In the figure, $H^d_{\pi(j)\pi(k)}$ indicates the quadratic coupling Hamiltonian between $G_{\pi(j)}$ and $G_{\pi(k)}$. It shows that if we
\begin{enumerate}
\item remove  the  $n-r$ one degree of freedom subsystems $G_{\pi(r+1)}$, $G_{\pi(r+2)}$, $\ldots$, $G_{\pi(n)}$ from this cascade connection, 

\item remove all  Hamiltonian coupling terms associated with each of the subsystems that have been removed,

\begin{figure}[tbph]
\centering
\includegraphics[scale=0.5]{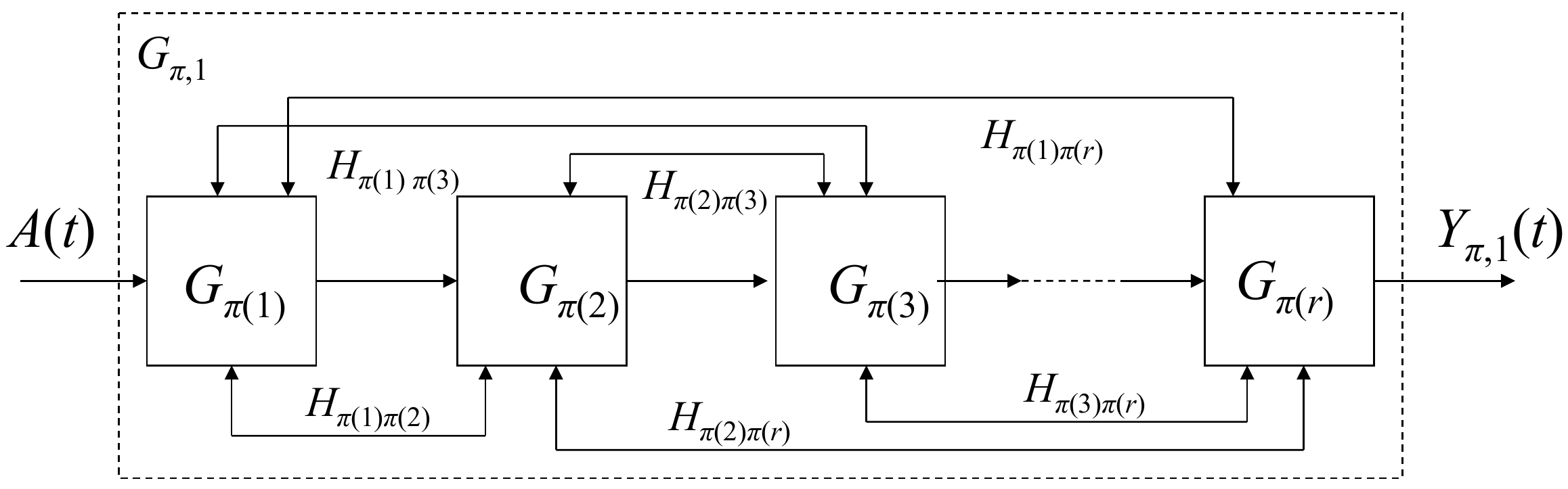}
\caption{Cascade realization of $G_{\pi,1}$ with direct interaction Hamiltonians $H^d_{\pi(j)\pi(k)}$ between sub-systems $G_{\pi(j)}$ and $G_{\pi(k)}$ for $j,k=1,2,\ldots,r$, following \cite{NJD08}. Illustration is for $r>3$.}
\label{fig:truncsubsysdecom}
\end{figure}

\item reconnect the remaining $r$ subsystems in a cascade connection in the  same order in which they appeared in the original cascade connection, and keeping the coupling Hamiltonians between each pair of remaining one degree of freedom sub-systems, as shown in Fig.~ \ref{fig:truncsubsysdecom},  
\end{enumerate}
we recover a physically realizable linear quantum stochastic system of degree $r$ as constructed in Theorem \ref{thm:pr-sub-systems}.

The theorem  may also be applied to the case where we allow certain transformations of $x(t$), namely symplectic transformations that preserve the canonical commutation relations (recall from Section \ref{sec:linear-summary} that a $2n \times 2n$ matrix is symplectic if $V \mathbb{J}_n V^{\top}=\mathbb{J}_n$. If $V$ is symplectic then so is $V^{\top}$ and $V^{-1}$). That is, we can transform internal variables from $x(t)$ to $z(t)=Vx(t)$, with $V$ symplectic so that $z(t)z(t)^{\top}-(z(t)z(t)^{\top})^{\top} = V(x(t)x(t)^{\top}-(x(t)x(t)^{\top})^{\top})V^{\top} =2\imath V \mathbb{J}_n V^{\top} =2 \imath \mathbb{J}_n $.  That is, $z(t)$ satisfies the same canonical commutation relations as $x(t)$. The dynamics of a system with $z(t)$ as the internal variable is then given by 
\begin{eqnarray*}
dz(t)&=&VAV^{-1} z(t) dt + VB dw(t),\; z(0)=z_0=Vx_0, \\
dy(t)&=& CV^{-1} z(t) + D dw(t),
\end{eqnarray*}
and again represents a physically realizable system. However,  strictly speaking, a linear quantum stochastic system with $x(t)$ as the internal variable and another system with $z(t)$ as the internal variable represent physically inequivalent quantum mechanical systems, although they have the same transfer function given by $C(sI-A)^{-1}B+D$. This physical subtlety, not encountered in the classical setting when similarity transformations are applied, has been discussed in some detail in \cite{Nurd10b}. In particular, the two systems do not have the same $S,L,H$ parameters.

If we are only interested in the steady-state input-output  evolution of $y(t)$ in relation to the driving noise $w(t)$ as $t \rightarrow \infty$ (assuming that the matrix $A$ is Hurwitz) then how the canonical position and momentum operators in $x(t)$  evolve is inconsequential. Thus, in this case we can allow a similarity transformation of the matrices $(A,B,C,D)$ to $(VAV^{-1}, VB, CV^{-1},D)$ with a symplectic $V$; see \cite{Nurd10b}. 
The advantage of such a transformation when we are mainly interested in steady-state input-output phenomena is that the transformed system matrices may be of  a more convenient form for analysis and computation, possibly allowing simplified formulas. Since $G'=(VAV^{-1}, VB, CV^{-1},D)$ is also a physically realizable system we can again apply Theorem \ref{thm:pr-sub-systems} to truncate certain sub-systems of $G'$.

\subsection{Application to completely passive linear quantum stochastic systems}
\label{sec:cp-systems}
We now specialize  to a class that will be referred to as  {\em completely passive} linear quantum stochastic systems \cite{GJN10,Pet10,Nurd10a,Nurd10b}. 
Following \cite{Nurd10a}, a physically realizable linear quantum stochastic system (\ref{eq:qsde-out-quad}) with an equal number of inputs and outputs is said to be completely passive if (i) $H$ can be written as $H = \frac{1}{2}a^{\dag} \tilde R a +c$, (ii) $L$ can be written as $L=\tilde Ka$ with $a=\frac{1}{2}(q_1+ \imath p_1,q_2+\imath p_2,\ldots,q_n + \imath p_n)^{\top}$ for some complex Hermitian matrix $\tilde R \in \mathbb{C}^{n \times n}$, a real constant $c$, and some $\tilde K \in \mathbb{C}^{m \times n}$, and  (iii) $D$ is unitary symplectic. On the other hand, if the system is of the form (\ref{eq:qsde-out-quad-e}) with less outputs than inputs, besides the same requirements (i) and (ii) of $H$ and $L$, for complete passivity we require that there exists a real matrix $E \in \mathbb{R}^{(2m-n_y)\times 2m}$ such that the matrix $\tilde D = [\begin{array}{cc} D^{\top} & E^{\top}\end{array}]^{\top}$ is unitary symplectic. Note that the latter systems are merely completely passive systems with an equal number of inputs and outputs  with certain pairs of output quadratures being ignored.  

It has been shown in \cite{Nurd10a}  that any completely passive system can be synthesized using purely passive devices, that is, devices that do not need an external source of quanta/energy. In quantum optics this means that they can be constructed using only optical cavities, beam splitters, and phase shifters.  We now show that the property of completely passivity  is also preserved under  subsystem truncation. The proof  is similar to that of \cite[Theorem 7]{Nurd10b}. 
\begin{lemma}
\label{lem:cp-preservation} If $G$ is completely passive then so is the truncated system $G_{\pi,1}$ for any permutation $\pi$.
\end{lemma}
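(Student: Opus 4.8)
The plan is to track what subsystem truncation does to the $(S,L,H)$ parameters rather than to the quadrature matrices, since complete passivity is a statement about the \emph{form} of $H$ and $L$. First I would observe that the quadrature-level truncation of Theorem~\ref{thm:pr-sub-systems} agrees with truncating the parameters. The parameterization $A_0=2\mathbb{J}_n(R+\Im\{K^\dag K\})$, $B_0=2\imath\mathbb{J}_n[\,-K^\dag S\ \ K^\top S^\#\,]$, $C_0=K$, $D_0=S$, followed by the fixed channelwise change of input/output coordinates relating $(A_0,B_0,C_0,D_0)$ to $(A,B,C,D)$, is \emph{modewise local}: $\mathbb{J}_n=\mathrm{diag}_n(\mathbb{J})$ is block-diagonal over the $n$ modes, the $2\times 2$ block of $A_0$ (and of $B_0$, $C_0$) associated with a mode or pair of modes is determined solely by the corresponding blocks of $R$ and columns of $K$, and the channelwise coordinate change acts only on the $m$ field channels, not on the mode index. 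Hence passing to the leading $2r\times 2r$ (resp.\ $2r\times 2m$) principal block commutes with the parameterization, so the subsystem $G_{\pi,1}=(A_{\pi,11},B_{\pi,1},C_{\pi,1},D_\pi)$ produced in Theorem~\ref{thm:pr-sub-systems} is exactly the $(S,L,H)$ system with parameters $\big(S,\ \tilde K_{\pi,1}a_{\pi,1},\ \tfrac12 a_{\pi,1}^\dag \tilde R_{\pi,11}a_{\pi,1}\big)$, where $a_\pi=(a_{\pi,1}^\top,a_{\pi,2}^\top)^\top$ is the annihilation-operator vector of $G_\pi$, $\tilde R_{\pi,11}$ the leading principal block of $\tilde R_\pi$, and $\tilde K_{\pi,1}$ the first block of columns of $\tilde K_\pi$. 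The only computation here is that $\Im\{\tilde K_{\pi,1}^\dag\tilde K_{\pi,1}\}$ is the leading principal block of $\Im\{\tilde K_\pi^\dag\tilde K_\pi\}$, which is immediate from the block form of $\tilde K_\pi^\dag\tilde K_\pi$.

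Given this identification the rest is short. Permutation preserves complete passivity: if $G$ is completely passive with $H=\tfrac12 a^\dag\tilde R a+c$ and $L=\tilde K a$, and $P_\pi\in\mathbb{R}^{n\times n}$ is the mode-permutation matrix with $a_\pi=P_\pi a$, then $H=\tfrac12 a_\pi^\dag(P_\pi\tilde R P_\pi^\top)a_\pi+c$ with $P_\pi\tilde R P_\pi^\top$ still Hermitian, $L=(\tilde K P_\pi^\top)a_\pi$ with $\tilde K P_\pi^\top$ still complex, and the scattering matrix $S$ (and, in the less-outputs case, the real matrix $E$ with $[D^\top\ E^\top]^\top$ unitary symplectic) is unchanged; so $G_\pi$ is completely passive. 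Truncation then preserves it too: $\tilde R_{\pi,11}$ is again Hermitian and $\tilde K_{\pi,1}$ is again complex, so by the identification above $G_{\pi,1}$ has a Hamiltonian and a coupling operator of completely passive form, i.e.\ conditions (i) and (ii) hold, while its scattering matrix $D_\pi=D$ is inherited unchanged, giving condition (iii) (or, in the less-outputs case, the condition that $[D^\top\ E^\top]^\top$ be unitary symplectic for the same $E$). Hence $G_{\pi,1}$ is completely passive.

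The main obstacle is only the bookkeeping in the identification step, namely verifying that the $(A,B,C,D)$-level truncation of Theorem~\ref{thm:pr-sub-systems} coincides with truncation at the level of the $(S,L,H)$ data; once the modewise locality of the parameterization is pinned down, conditions (i)--(iii) for $G_{\pi,1}$ follow by inspection. An alternative route that avoids this bookkeeping argues at the network level, in the spirit of the cascade-decomposition proof of Theorem~\ref{thm:pr-sub-systems}: by \cite[Theorem~5.1]{NJD08} a completely passive $G_\pi$ admits a cascade realization by one degree of freedom completely passive systems $G_{\pi(1)},\ldots,G_{\pi(n)}$ with purely passive (annihilation-operator-only) direct-coupling Hamiltonians $H^d_{\pi(j)\pi(k)}$; deleting $G_{\pi(r+1)},\ldots,G_{\pi(n)}$ together with every coupling Hamiltonian involving them leaves a cascade of completely passive one-mode systems with passive couplings, and since the series product $\triangleleft$, the concatenation product $\boxplus$, and the addition of a passive direct Hamiltonian each preserve complete passivity, $G_{\pi,1}$ is completely passive.
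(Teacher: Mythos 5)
Your proof is correct, and your primary route is genuinely different from the paper's. The paper proves this lemma by invoking the cascade synthesis result of \cite[Theorem 5.1]{NJD08}: it writes $G_{\pi}=(G_{\pi(n)}\triangleleft\cdots\triangleleft G_{\pi(1)})\boxplus(0,0,\sum_{j<k}H^d_{\pi(k)\pi(j)})$ with each one-mode factor $G_{\pi(j)}$ completely passive by construction, identifies $G_{\pi,1}$ with the truncated cascade $(G_{\pi(r)}\triangleleft\cdots\triangleleft G_{\pi(1)})\boxplus(0,0,\sum_{j<k\leq r}H^d_{\pi(k)\pi(j)})$, and concludes by inspection of the series product; the case $S\neq I$ is handled by prepending a static passive network. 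This is exactly the alternative route you sketch in your closing paragraph. Your main argument instead verifies directly that the parameterization $(S,K,R)\mapsto(A,B,C,D)$ is modewise local, so that principal-submatrix truncation of the quadruple coincides with truncating $\tilde R$ to its leading Hermitian block and $\tilde K$ to its leading block of columns, whence conditions (i)--(iii) of complete passivity are inherited by inspection. Your route is more elementary in that it avoids the network synthesis theorem altogether and makes the key identification (quadrature-level truncation equals parameter-level truncation) explicit rather than routing it through a cascade realization; the paper's route buys a physical network picture (Figs.~\ref{fig:sysdecom}--\ref{fig:truncsubsysdecom}) that is reused elsewhere in the discussion of \cite{Pet12}. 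Your handling of the less-outputs-than-inputs case (the same $E$ witnesses unitarity-symplecticity of the augmented $D$) is also fine and matches the paper's remark that such systems are just equal-input-output completely passive systems with some output quadrature pairs discarded.
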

\begin{proof}
Since $G$ is completely passive so is $G_{\pi}$ for any permutation $\pi$ because they represent the same physical system up to a permutation of the position and momentum operators. It suffices to consider completely passive systems with the same number of inputs and outputs, as any completely passive system with less outputs than inputs can be obtained from the former simply by disregarding pairs of output quadratures that are of no interest. To this end, assume that the system has an equal number of inputs and outputs and $S=I$ (i.e., $D=I$). Let $\tilde K=[\begin{array}{cccc} \tilde K_{1} & \tilde K_2 & \ldots & \tilde K_n\end{array}]$ and $\tilde R=[\tilde R_{jk}]_{j,k=1,2,\ldots,n}$, where $\tilde K_j \in \mathbb{C}^{m \times 1}$, and $\tilde R_{jk}$ are complex numbers with $\tilde R_{kj}=\tilde R_{jk}^*$.
Let $\tilde K_{\pi}=[\begin{array}{cccc} \tilde K_{\pi(1)} & \tilde K_{\pi(2)} & \ldots & \tilde K_{\pi(n)} \end{array}]$, $a_{\pi(j)}=\frac{1}{2}(q_{\pi(j)}+\imath p_{\pi(j)})$,  $G_{\pi(j)}=(I,\tilde K_{\pi(j)} a_{\pi(j)}, \frac{1}{2}\tilde R_{jj}a_{\pi(j)}^* a_{\pi(j)}+\tilde R_{jj}/4)$, and $H^d_{\pi(k)\pi(j)}=\tilde R_{kj}a_{\pi(k)}^*a_{\pi(j)}+\tilde R_{kj}^*a_{\pi(k)}a_{\pi(j)}^*+\frac{\imath}{2}(\tilde K_{\pi(k)}^{\dag}\tilde K_{\pi(j)}a_{\pi(k)}^*a_{\pi(j)}-\tilde K_{\pi(j)}^{\dag}\tilde K_{\pi(k)}a_{\pi(k)}a_{\pi(j)}^*)$  for all $k>j$ and $j=1,2,\ldots,n$. Then by \cite[Theorem 5.1]{NJD08}, we have that $G_{\pi}=(G_{\pi(n)} \triangleleft  \cdots \triangleleft G_{\pi(2)} \triangleleft G_{\pi(1)}) \boxplus (0,0,\sum_{j=1}^{n-1}\sum_{k=j+1}^n H^d_{\pi(k)\pi(j)})$ (recall the definition of the series product $\triangleleft$ and the concatenation product $\boxplus$ from Section \ref{sec:linear-summary}); see Fig.~\ref{fig:sysdecom}. Note that by construction all the $G_{\pi(j)}$'s are completely passive. Following the discussion in Sec. \ref{sec:pr-preserve-truncation}, we can write $G_{\pi,1}=(G_{\pi(r)} \triangleleft  \cdots \triangleleft G_{\pi(2)} \triangleleft G_{\pi(1)}) \boxplus (0,0,\sum_{j=1}^{r-1}\sum_{k=j+1}^r H^d_{\pi(k)\pi(j)})$. Since $G_{\pi(r)} \triangleleft  \cdots \triangleleft G_{\pi(2)} \triangleleft G_{\pi(1)}$ is by inspection completely passive, it is now apparent that $G_{\pi,1}$ is completely passive. Evidently this holds true for any permutation map $\pi$ since the choice of $\pi$ was arbitrary to begin with.

If $S$ is unitary  but not equal to the identity matrix (this means that $D$ is a unitary symplectic matrix different from the identity matrix), then one simply inserts a static passive network that implements $S$ between the input fields and $G_{\pi(1)}$; see \cite[Section 3]{NJD08}. The same argument as above then goes through.    $\hfill \Box$
\end{proof}

A truncation method has been proposed for a class of completely passive linear quantum stochastic systems in \cite{Pet12} based on an algorithm developed in \cite{Pet11}. This algorithm is not guaranteed to be applicable to all completely passive linear quantum stochastic systems but to a ``generic'' subclass of it. Theorem \ref{thm:pr-sub-systems} and Lemma \ref{lem:cp-preservation} of this paper shows that a quantum structure preserving subsystem truncation method can be developed for the entire class of completely passive systems which guarantees that the truncation is also completely passive. The idea in \cite{Pet11}, and later proved to be true for all completely passive linear quantum stochastic systems in \cite{Nurd10b}, is that if we allow a symplectic similarity transformations, the transfer function  of these systems can always be realized by a purely cascade connection of completely passive systems without the need for any direct interaction Hamiltonians $H^d_{jk}$ between any sub-systems $j$ and $k$. Then the model reduction strategy proposed in \cite{Pet12} is to truncate some tail components in this cascade. Using the results of \cite{Nurd10b} and Theorem \ref{thm:pr-sub-systems} of this paper, a similar truncation strategy to \cite{Pet12} can thus be applied to all completely passive systems provided that $G_{\pi}$ and the truncated subsystem $G_{\pi,1}$ are both asympotically stable. 

\section{Co-diagonalizability of the controllability and observability Gramians and model reduction by quasi-balanced truncation}
\label{sec:qb-reduction}
In this section we will consider the question of when it is possible to have a balanced or an ``almost'' balanced realization of a linear quantum stochastic linear system under the restriction of similarity transformation by a symplectic matrix. That is, we will derive conditions under which there is a symplectic similarity transformation of the system matrices $A,B,C,D$ such that the transformed system has controllability Gramian $P$ and observability Gramian $Q$ that are diagonal. Then we say that the Gramians $P$ and $Q$ are {\em co-diagonalisable}, the meaning of which will be made precise below. In the classical setting, if the system is minimal (i.e., it is controllable and observable) it is always possible to not only have the Gramians $P$ and $Q$ simultaneously diagonal but to make them diagonal and equal. The idea for model reduction by balanced truncation is to remove subsystems that are associated with the smallest positive diagonal entries of $P$ and $Q$, these correspond heuristically to systems modes that are least controllable as well as least observable. As will be shown, the restriction to a symplectic transformation somewhat limits what is achievable with linear quantum stochastic systems. Nonetheless, in Theorem \ref{thm:bt-q-lin} of this section precise conditions are deduced under which a symplectic transformation exists such that the transformed system will have $P$ and $Q$ simultaneously diagonal (though not necessarily to the same diagonal matrix).  

Consider a physically realizable $n$ degree of freedom linear quantum stochastic system (\ref{eq:qsde-out-quad}), thus the system matrices satisfy (\ref{eq:pr-1e})-(\ref{eq:pr-3e}), with $n_y$ possibly less than  $2m$ (i.e., possibly less outputs than inputs). We have seen that similarity transformations for linear quantum stochastic systems are restricted to symplectic matrices $T$ to preserve physical realizability. We assume that the system matrix $A$ is Hurwitz (all its eigenvalues are in the left half plane). As for classical linear systems, we can define the controllability and observability matrices as
$$
[\begin{array}{ccccc} B & AB & A^2 B & \ldots & A^{2n-1}B \end{array}],
$$
and 
$$
[\begin{array}{ccccc} C^{\top} & A^{\top}C^{\top} & (A^{\top})^2 C^{\top} & \ldots & (A^{\top})^{2n-1}C^{\top} \end{array}]^{\top},
$$
respectively. Since $A$ is Hurwitz, there exists a unique $0 \leq P = P^{\top} \in \mathbb{R}^{2n \times 2n}$ and $0 \leq Q=Q^{\top} \in \mathbb{R}^{2n \times 2n}$ satisfying the Lyapunov equations
\begin{eqnarray*}
&AP + P A^{\top} + B B^{\top} =0,\\
&A^{\top}Q +Q A + C^{\top} C =0,
\end{eqnarray*}
respectively, and, moreover, if the system is controllable (i.e., controllability matrix is full rank) and observable (i.e., observability matrix is full rank) then $P>0$ and $Q>0$; see, e.g., \cite{ZDG95}.  Using standard terminology, the matrices $P$ and $Q$ are referred to as the controllability and observability Gramian of the system, respectively. The transfer function $G(s)$ of the system  is defined as $G(s)=C(sI-A)^{-1}B+D$. In this section, we investigate a necessary and sufficient condition under which there is a {\em symplectic} matrix $T \in \mathbb{R}^{2n \times 2n}$ such that the transformed system with system matrices $(T A T^{-1}, TB, CT^{-1},D)$ has controllability and observability Gramians that are simultaneously diagonal. If there exists such a $T$ then we say that the Gramians $P$ and $Q$ are {\em co-diagonalizable}. A more convenient way to express co-diagonalizability is that there exists a symplectic matrix $T$ such that $TPT^{\top}=\Sigma_P$ and $T^{-\top}Q T^{-1}=\Sigma_Q$, with $\Sigma_P$ and $\Sigma_Q$  nonnegative and diagonal. In analogy with balanced realization for classical linear time-invariant systems \cite{ZDG95},  the case where $\Sigma_P=\Sigma_Q$ will be of particular interest. That is, when $P$ and $Q$ are co-diagonalizable to the same diagonal matrix.

Before stating the main results, let us introduce some formal definitions. Two matrices $M_1,M_2 \in \mathbb{R}^{2n \times 2n}$  are said to be {\em symplectically congruent} if there exists a  symplectic matrix $T \in \mathbb{R}^{2n \times 2n}$ such that $TM_1T^{\top}=M_2$. Two matrices $M_1,M_2 \in \mathbb{R}^{2n \times 2n}$ are said to be {\em symplectically  similar} if there exists a  symplectic matrix $T \in \mathbb{R}^{2n \times 2n}$ such that $TM_1T^{-1}=M_2$. Our first result is the following:

\begin{lemma}
\label{lem:sym-cong-sim} A real $2n \times 2n$ matrix $P=P^{\top} \geq 0$ is symplectically congruent to a real diagonal $2n \times 2n$ matrix $\Sigma \geq 0$ if and only if $\mathbb{J}_n P$ is symplectically similar to $\mathbb{J}_n \Sigma$. If the symplectic congruence holds and $\mathbb{J}_n P$ is diagonalizable then its eigenvalues come in imaginary conjugate pairs $\pm \imath \sigma_i$, $i=1,2,\ldots,n$. In particular, if $P>0$  then $P$ is symplectically congruent to  a diagonal matrix $\Sigma >0$, and $\mathbb{J}_n P$ is diagonalizable and symplectically similar to $\mathbb{J}_n \Sigma$.
\end{lemma}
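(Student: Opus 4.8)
The plan is to reduce the whole statement to a single algebraic identity that turns symplectic congruence of $P$ into symplectic similarity of $\mathbb{J}_n P$, and then to read off the spectrum from the explicit block form of $\mathbb{J}_n\Sigma$. The only ingredient not internal to the excerpt is the classical Williamson normal form theorem, which I will invoke for the positive definite case.

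First I record the elementary identity $T^{-\top}\mathbb{J}_n=\mathbb{J}_n T$, valid for every symplectic $T$: since $T^{\top}$ is symplectic (Section~\ref{sec:pr-preserve-truncation}) one has $T^{\top}\mathbb{J}_n T=\mathbb{J}_n$, and left-multiplying by $T^{-\top}$ gives the claim. Now suppose $P$ is symplectically congruent to the diagonal $\Sigma\geq 0$, say $TPT^{\top}=\Sigma$ with $T$ symplectic, and put $\tilde T=T^{-\top}$, again symplectic. Then
\[
\tilde T(\mathbb{J}_n P)\tilde T^{-1}=T^{-\top}\mathbb{J}_n P T^{\top}=(T^{-\top}\mathbb{J}_n)PT^{\top}=\mathbb{J}_n(TPT^{\top})=\mathbb{J}_n\Sigma ,
\]
so $\mathbb{J}_n P$ is symplectically similar to $\mathbb{J}_n\Sigma$. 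Conversely, if $S(\mathbb{J}_n P)S^{-1}=\mathbb{J}_n\Sigma$ with $S$ symplectic, set $T=S^{-\top}$ (symplectic, with $T^{-\top}=S$, $T^{\top}=S^{-1}$) and run the same chain of equalities backwards to get $\mathbb{J}_n(TPT^{\top})=\mathbb{J}_n\Sigma$; cancelling the invertible factor $\mathbb{J}_n$ yields $TPT^{\top}=\Sigma$. This establishes the "if and only if".

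For the spectrum under the congruence, write $\Sigma=\mathrm{diag}(\sigma_1,\ldots,\sigma_{2n})$ with all $\sigma_k\geq 0$. Since $\mathbb{J}_n=I_n\otimes\mathbb{J}$, the matrix $\mathbb{J}_n\Sigma$ is block diagonal with $2\times 2$ diagonal blocks $\left[\begin{smallmatrix}0 & \sigma_{2i}\\ -\sigma_{2i-1} & 0\end{smallmatrix}\right]$, $i=1,\ldots,n$, each with characteristic polynomial $\lambda^{2}+\sigma_{2i-1}\sigma_{2i}$ and hence eigenvalues $\pm\imath\sqrt{\sigma_{2i-1}\sigma_{2i}}$ — a genuine conjugate imaginary pair because $\sigma_{2i-1}\sigma_{2i}\geq 0$. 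By the equivalence just proved, $\mathbb{J}_n P$ is similar to $\mathbb{J}_n\Sigma$ and so has the same eigenvalues; relabelling $\sigma_i:=\sqrt{\sigma_{2i-1}\sigma_{2i}}$, the spectrum of $\mathbb{J}_n P$ is exactly $\{\pm\imath\sigma_i:i=1,\ldots,n\}$. The diagonalizability hypothesis only serves to exclude the nilpotent blocks (exactly one of $\sigma_{2i-1},\sigma_{2i}$ zero) and is not otherwise used at this step.

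For the final assertion, assume $P>0$ and invoke Williamson's theorem: there is a symplectic $T$ with $TPT^{\top}=\mathrm{diag}(d_1,d_1,d_2,d_2,\ldots,d_n,d_n)$ and all $d_i>0$, so $\Sigma:=\mathrm{diag}(d_1,d_1,\ldots,d_n,d_n)>0$ is diagonal, giving the claimed congruence. Then $\mathbb{J}_n\Sigma$ is block diagonal with blocks $d_i\mathbb{J}$ whose eigenvalues $\pm\imath d_i$ are distinct and nonzero, so each block — hence $\mathbb{J}_n\Sigma$, hence (by the similarity established above) $\mathbb{J}_n P$ — is diagonalizable, and its eigenvalues come in the pairs $\pm\imath d_i$ by the previous paragraph. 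I do not anticipate a genuine obstacle here: the argument is essentially bookkeeping with the symplectic identities, and the only care needed is to keep the transpose/inverse manipulations and the interleaved $(q_i,p_i)$ ordering convention consistent throughout.
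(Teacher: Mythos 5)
Your proof is correct and follows essentially the same route as the paper's: both directions of the equivalence rest on the identity $T^{-\top}\mathbb{J}_n P T^{\top}=\mathbb{J}_n(TPT^{\top})$ for symplectic $T$, the spectral claim is read off from the similarity to $\mathbb{J}_n\Sigma$ with $\Sigma\geq 0$, and the positive definite case is handled by Williamson's theorem. The only (harmless) difference is that you make the eigenvalue computation explicit via the $2\times 2$ blocks $\bigl[\begin{smallmatrix}0 & \sigma_{2i}\\ -\sigma_{2i-1} & 0\end{smallmatrix}\bigr]$, which in fact shows the $\pm\imath\sigma_i$ pairing holds even without the diagonalizability hypothesis, whereas the paper simply asserts the form of the spectrum of $\mathbb{J}_n\Sigma$.
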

\begin{proof}
See Appendix \ref{app:sym-cong-sim}.
\end{proof}

\textbf{Remark.} If $P \geq 0$ and $\mathbb{J}_n P$ is diagonalizable then the $n$ largest nonnegative eigenvalues $\sigma_1$, $\sigma_2$, $\ldots$, $\sigma_n$ of $\imath \mathbb{J}_n P$ are referred to as the {\em symplectic eigenvalues} of $P$. In particular, by Williamson's Theorem \cite{Will36}, \cite[Lemma 2]{PSL09}, $\mathbb{J}_nP$ is always diagonalizable when $P>0$ and in this case $\sigma_1$, $\sigma_2$, $\ldots$, $\sigma_n>0$. 

\begin{lemma}
\label{lem:sym-eig-struct} Let $P=P^{\top} \geq 0$ be a real $2n \times 2n$ matrix with $\mathbb{J}_nP$ diagonalizable (in particular, whenever $P>0$). Define $\mathbb{K}_n=P_s \mathbb{J}_n P_s^{\top}=\left[\begin{array}{cc} 0 & I_n \\ -I_n & 0 \end{array} \right]$ and $\tilde P=P_s P P_s^{\top}$, where $P_s$ is a $2n \times 2n$ permutation matrix acting as $$P_s(q_1,p_1,q_2,p_2,\ldots,q_n,p_n)^{\top} =(q_1,q_2,\ldots,q_n,p_1,p_2,\ldots,p_n)^{\top}.$$  Suppose that $P$ has symplectic eigenvalues $\sigma_1,\sigma_2,\ldots,\sigma_n$, with $\sigma_k \geq  0$ for $k=1,2,\ldots,n$. Then there exist  $2n$ linearly independent eigenvectors $v_1$, $v_1^{\#}$, $v_2$, $v_2^{\#}$, $\ldots$, $v_n$, $v_n^{\#}$ of $\mathbb{K}_n \tilde P$ satisfying $\mathbb{K}_n \tilde P v_k=\imath \sigma_k v_k$  and $\mathbb{K}_n \tilde P v_k^{\#}=-\imath \sigma_k v_k^{\#}$ for $k=1,2,\ldots,n$ such that the complex $2n \times 2n$ matrix 
\begin{equation}
V=[\begin{array}{cccccccc} v_1 & v_2 & \ldots & v_n & v_1^{\#} & v_2^{\#} & \ldots & v_n^{\#} \end{array}] \label{eq:V-transform}
\end{equation}
satisfies
\begin{eqnarray}
-\imath V^{-1}\mathbb{K}_n \tilde P V &=& {\rm diag}(\sigma_1,\sigma_2,\ldots,\sigma_n,-\sigma_1,-\sigma_2,\ldots,-\sigma_n), \label{eq:V-prop-1}\\
-\imath V^{\dag} \mathbb{K}_n V &=& {\rm diag}(I_n,-I_n). \label{eq:V-prop-2}
\end{eqnarray}
\end{lemma}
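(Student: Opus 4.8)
The plan is to analyze the real matrix $\mathbb{K}_n\tilde P$, which equals $P_s(\mathbb{J}_nP)P_s^{\top}=P_s(\mathbb{J}_nP)P_s^{-1}$ because $P_s$ is a permutation matrix; hence $\mathbb{K}_n\tilde P$ is similar to $\mathbb{J}_nP$, and in particular is diagonalizable with the same spectrum. Writing $P=M^{\top}M$ for a real $M$, the matrix $\mathbb{J}_nP$ is similar to the real skew-symmetric matrix $M\mathbb{J}_nM^{\top}$, so its eigenvalues are purely imaginary; combined with $P\ge 0$ this gives that they are exactly $\pm\imath\sigma_1,\dots,\pm\imath\sigma_n$, with the symplectic eigenvalues $\sigma_k\ge 0$. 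Since $\mathbb{K}_n\tilde P$ is real, the eigenspace for $-\imath\sigma_k$ is the complex conjugate of the eigenspace for $\imath\sigma_k$, and the eigenspace for the eigenvalue $0$ is $\ker(\mathbb{K}_n\tilde P)=\ker\tilde P$, of dimension $2m_0$ where $m_0$ is the number of vanishing $\sigma_k$.

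The key device is the Hermitian sesquilinear form $h(u,v):=-\imath\,u^{\dag}\mathbb{K}_nv=u^{\dag}(-\imath\mathbb{K}_n)v$ (Hermitian because $-\imath\mathbb{K}_n$ is). Using $\tilde P=\tilde P^{\top}$ and $\mathbb{K}_n^{-1}=\mathbb{K}_n^{\top}=-\mathbb{K}_n$, one checks that $\mathbb{K}_n\tilde P u=\mu u$ implies $\tilde P u=-\mu\mathbb{K}_n u$, and comparing $u^{\dag}\tilde P v$ computed two ways yields $(\bar\mu+\nu)\,u^{\dag}\mathbb{K}_nv=0$ whenever $\mathbb{K}_n\tilde P u=\mu u$ and $\mathbb{K}_n\tilde P v=\nu v$. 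Hence, letting $E_{\imath\tau}$ denote the eigenspace for eigenvalue $\imath\tau$, eigenvectors for eigenvalues $\imath\tau$ and $\imath\tau'$ with $\tau\ne\tau'$ are $h$-orthogonal, and $E_{\imath\tau}$ is $h$-orthogonal to $E_{-\imath\tau'}$ as soon as $\tau+\tau'>0$; so the only eigenspace pairs that can fail to be $h$-orthogonal are $(E_{\imath\tau},E_{\imath\tau})$ with $\tau>0$ and $(\ker\tilde P,\ker\tilde P)$.

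I would then treat these two cases separately. For $\tau>0$ and $0\ne u\in E_{\imath\tau}$, one has $\tilde P u=-\imath\tau\mathbb{K}_n u$, so $u^{\dag}\tilde P u=\tau\,h(u,u)$; since $\mathbb{K}_n$ is invertible, $u\notin\ker\tilde P$, hence $u^{\dag}\tilde P u>0$, so $h$ is positive definite on $E_{\imath\tau}$ and admits an $h$-orthonormal basis — its complex conjugate is a basis of $E_{-\imath\tau}$ on which $h$ equals $-I$ (using the identity $h(u^{\#},v^{\#})=-\overline{h(u,v)}$), and the cross $h$-products between the two bases vanish since $\tau+\tau>0$. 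For the zero block I would first show $h$ is non-degenerate on $\ker\tilde P$: if $u\in\ker\tilde P$ is $h$-orthogonal to all of $\ker\tilde P$, then $\mathbb{K}_n u\perp\ker\tilde P$, i.e.\ $\mathbb{K}_n u$ lies in the range of $\tilde P$, so (since $\mathbb{K}_n^{-1}=-\mathbb{K}_n$) $u$ lies in the range of $\mathbb{K}_n\tilde P$; but $u\in\ker(\mathbb{K}_n\tilde P)$, and $\mathbb{K}_n\tilde P$ being diagonalizable its kernel and range meet only at $0$, so $u=0$. As $\tilde P$ is real, $\ker\tilde P$ is the complexification of $N:=\ker\tilde P\cap\mathbb{R}^{2n}$, and on $N$ the form $\beta(x,y):=\imath\,h(x,y)=x^{\top}\mathbb{K}_ny$ is real, skew-symmetric and non-degenerate, hence a symplectic form on the $2m_0$-dimensional space $N$. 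Taking a symplectic basis $e_1,f_1,\dots,e_{m_0},f_{m_0}$ of $(N,\beta)$ and setting $v_{0,i}:=\tfrac{1}{\sqrt2}(e_i+\imath f_i)$, a direct computation gives $h(v_{0,i},v_{0,j})=\delta_{ij}$ and $h(v_{0,i},v_{0,j}^{\#})=0$, while $\{v_{0,i},v_{0,i}^{\#}\}_i$ is a basis of $\ker\tilde P$.

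Finally, I would assemble $V$: order the symplectic eigenvalues as $\sigma_1,\dots,\sigma_n$; for $\sigma_k>0$ let $v_k$ be the matching member of the $h$-orthonormal basis of $E_{\imath\sigma_k}$, for $\sigma_k=0$ let $v_k$ be the appropriate $v_{0,i}$, put $v_k^{\#}=\overline{v_k}$, and form $V$ as in (\ref{eq:V-transform}). The $2n$ columns are a basis of $\mathbb{C}^{2n}$ (diagonalizability, together with the zero eigenspace receiving the full basis $\{v_{0,i},v_{0,i}^{\#}\}$), so $V$ is invertible; (\ref{eq:V-prop-1}) is immediate from $\mathbb{K}_n\tilde P v_k=\imath\sigma_k v_k$ and $\mathbb{K}_n\tilde P v_k^{\#}=-\imath\sigma_k v_k^{\#}$; and (\ref{eq:V-prop-2}) follows block-by-block, the $(1,1)$ and $(1,2)$ blocks being $I_n$ and $0$ from the normalizations above, the $(2,1)$ block being $0$ by Hermitian symmetry of $-\imath V^{\dag}\mathbb{K}_nV$, and the $(2,2)$ block being $-I_n$ via $h(u^{\#},v^{\#})=-\overline{h(u,v)}$. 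The step I expect to be the main obstacle is the zero-symplectic-eigenvalue block — establishing non-degeneracy of $h$ on $\ker\tilde P$ (which is exactly where diagonalizability of $\mathbb{J}_nP$ is used) and extracting a conjugate-symmetric basis of the correct signature from a symplectic basis of the real kernel; the positive blocks are routine, amounting to positive-definiteness of a Hermitian form plus Gram--Schmidt.
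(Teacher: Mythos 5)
Your proof is correct, but it takes a substantively different route from the paper's. The paper's own argument is very short: it observes the similarity $\mathbb{K}_n\tilde P = P_s\mathbb{J}_n P P_s^{\top}$ (as you do), reads off (\ref{eq:V-prop-1}) from diagonalizability, and then delegates the entire normalization statement (\ref{eq:V-prop-2}) to a cited external result (Lemma 71 of the Xiao reference), whereas you prove (\ref{eq:V-prop-2}) from scratch. You introduce the indefinite Hermitian form $h(u,v)=-\imath\, u^{\dag}\mathbb{K}_n v$, derive the orthogonality relation $(\bar\mu+\nu)\,u^{\dag}\mathbb{K}_n v=0$ between eigenvectors, show $h$ is positive definite on each eigenspace $E_{\imath\tau}$ with $\tau>0$ via $u^{\dag}\tilde P u=\tau\, h(u,u)>0$, and handle the kernel by proving non-degeneracy of $h$ there (correctly isolating diagonalizability as the needed ingredient, through $\ker(\mathbb{K}_n\tilde P)\cap\mathrm{range}(\mathbb{K}_n\tilde P)=\{0\}$) and then building a conjugate-symmetric basis from a real symplectic basis of $\ker\tilde P\cap\mathbb{R}^{2n}$. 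The computations check out, including the signature bookkeeping $h(u^{\#},v^{\#})=-\overline{h(u,v)}$ that produces the $-I_n$ block. What your approach buys is self-containedness and, incidentally, a small repair: the paper's proof invokes Lemma \ref{lem:sym-cong-sim} to conclude that the eigenvalues of $\mathbb{J}_nP$ are $\pm\imath\sigma_k$, but that conclusion is stated there under the assumption that the symplectic congruence already holds, which is not among the hypotheses of the present lemma; your direct argument via $P=M^{\top}M$ and the skew-symmetry of $M\mathbb{J}_nM^{\top}$ establishes the purely imaginary spectrum without that detour. The cost is length: the paper's citation collapses your two hardest steps (positive-definiteness on the nonzero eigenspaces and the degenerate kernel block) into one line.
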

\begin{proof}
Note that $\mathbb{K}_n \tilde P = P_s   \mathbb{J}_n P_s^{\top} \tilde P = P_s \mathbb{J}_n P P_s^{\top}$. Therefore, $\mathbb{K}_n \tilde P$ and $\mathbb{J}_n P$ are similar to one another. Since $\mathbb{J}_n P$ is diagonalizable by hypothesis (in particular, whenever $P>0$), from Lemma \ref{lem:sym-cong-sim} it  follows that $-\imath \mathbb{K}_n \tilde P$ is diagonalizable with real eigenvalues $\pm \sigma_1$, $\pm \sigma_2$, $\ldots$, $\pm \sigma_n$ with the corresponding eigenvectors in $V$. Thus $-\imath \mathbb{K}_n \tilde P$ satisfies (\ref{eq:V-prop-1}). The lemma now follows immediately from the following result of \cite{Xiao09}:
\begin{lemma}
\cite[Lemma 71, Section VI, pp. 32-34]{Xiao09} If $\mathbb{K}_n \tilde P$ is diagonalizable then the matrix $V$ defined in (\ref{eq:V-transform}) satisfies (\ref{eq:V-prop-2}).
\end{lemma}
$\hfill \Box$
\end{proof}

Based on the above lemma we can prove the following:
\begin{theorem}
\label{thm:sym-diag} Let $P=P^{\top} \geq 0$ be a real $2n \times 2n$ matrix with $\mathbb{J}_n P$ diagonalizable (in particular, whenever $P>0$), and suppose that the symplectic eigenvalues of $P$ are $\sigma_1$, $\sigma_2$, $\ldots$, $\sigma_n$. Define $V$ and $P_s$ as in Lemma \ref{lem:sym-eig-struct}. Also, define the $2n \times 2n$ unitary matrix  
$$
U=\frac{1}{\sqrt{2}}{\rm diag}_n\left(\left[\begin{array}{cc} 1 & -\imath \\ 1 & \imath  \end{array} \right]\right),
$$ 
and the $2n \times 2n$ matrix $T=(P_s^{\top} V P_s U)^{\top}$. Then  $T$ is symplectic, $T^{-\top} \mathbb{J}_n P T^{\top}=\mathbb{J}_n \Sigma =\Sigma \mathbb{J}_n$, and $T P T^{\top}=\Sigma$, with $\Sigma={\rm diag}(\sigma_1 I_2, \sigma_2I_2,\ldots,\sigma_n I_2)$. 
\end{theorem}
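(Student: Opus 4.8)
The plan is to transport all the spectral data furnished by Lemma~\ref{lem:sym-eig-struct} from $\mathbb{K}_n\tilde P$ back to $\mathbb{J}_nP$ through the permutation $P_s$, to upgrade the Hermitian normalization of the eigenvector matrix to a bilinear one using the conjugate-pair structure of its columns, and then to let the explicit $2\times2$ building block of $U$ perform the final cancellations; only the two assertions $T\mathbb{J}_nT^\top=\mathbb{J}_n$ and $TPT^\top=\Sigma$ need be proved directly, the identity $T^{-\top}\mathbb{J}_nPT^\top=\mathbb{J}_n\Sigma=\Sigma\mathbb{J}_n$ being an immediate consequence. To begin, I would set $\tilde V=P_s^\top V$; since $\mathbb{K}_n\tilde P=P_s\mathbb{J}_nPP_s^\top$ (as noted in the proof of Lemma~\ref{lem:sym-eig-struct}) and $P_s$ is orthogonal, relations (\ref{eq:V-prop-1})--(\ref{eq:V-prop-2}) become
\[
\tilde V^{-1}\mathbb{J}_nP\tilde V=\imath\Sigma_D,\qquad \tilde V^{\dag}\mathbb{J}_n\tilde V=\imath\,{\rm diag}(I_n,-I_n),
\]
with $\Sigma_D={\rm diag}(\sigma_1,\ldots,\sigma_n,-\sigma_1,\ldots,-\sigma_n)$, and the columns of $\tilde V$ are $\tilde v_1,\ldots,\tilde v_n,\tilde v_1^{\#},\ldots,\tilde v_n^{\#}$, eigenvectors of $\mathbb{J}_nP$.

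The single genuinely non-routine step is to convert the second, sesquilinear, relation into a bilinear one. Because $\tilde V^{\#}=\tilde V\,\Xi$ with $\Xi=\left[\begin{smallmatrix}0&I_n\\ I_n&0\end{smallmatrix}\right]$, we have $\tilde V^\top=\Xi\,\tilde V^{\dag}$, hence $\tilde V^\top\mathbb{J}_n\tilde V=\imath\,\Xi\,{\rm diag}(I_n,-I_n)=-\imath\,\mathbb{K}_n$. Rewriting the first relation as $P\tilde V=-\imath\mathbb{J}_n\tilde V\Sigma_D$ and substituting then gives $\tilde V^\top P\tilde V=-\imath(\tilde V^\top\mathbb{J}_n\tilde V)\Sigma_D=-\mathbb{K}_n\Sigma_D=\left[\begin{smallmatrix}0&\Lambda\\ \Lambda&0\end{smallmatrix}\right]$, where $\Lambda={\rm diag}(\sigma_1,\ldots,\sigma_n)$.

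Next I would absorb the column reordering and the passage between ``block'' and ``interleaved'' coordinates. Write $T^\top=P_s^\top VP_sU=\tilde VP_sU=:NU$. The factor $P_s$ on the right permutes the columns of $\tilde V$ into the interleaved conjugate-pair order $N=[\,\tilde v_1\ \tilde v_1^{\#}\ \cdots\ \tilde v_n\ \tilde v_n^{\#}\,]$, so that $NU$, hence $T$, is real, its $k$-th pair of columns being $[\tilde v_k\ \tilde v_k^{\#}]\,u=\sqrt{2}\,[\Re\{\tilde v_k\}\ \Im\{\tilde v_k\}]$; meanwhile conjugation by $P_s$ converts a matrix written in the block coordinates of $\mathbb{K}_n$ into one in the canonical interleaved coordinates of $\mathbb{J}_n$. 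Using $P_s^\top\mathbb{K}_nP_s=\mathbb{J}_n$ I obtain $N^\top\mathbb{J}_nN=P_s^\top(\tilde V^\top\mathbb{J}_n\tilde V)P_s=-\imath\mathbb{J}_n$ and $N^\top PN=P_s^\top\left[\begin{smallmatrix}0&\Lambda\\ \Lambda&0\end{smallmatrix}\right]P_s={\rm diag}(\sigma_1\tau,\ldots,\sigma_n\tau)$ with $\tau=\left[\begin{smallmatrix}0&1\\1&0\end{smallmatrix}\right]$. Finally, with $U={\rm diag}_n(u)$, $u=\tfrac{1}{\sqrt{2}}\left[\begin{smallmatrix}1&-\imath\\ 1&\imath\end{smallmatrix}\right]$, the elementary identities $u^\top\mathbb{J}u=\imath\mathbb{J}$ and $u^\top\tau u=I_2$ give
\[
T\mathbb{J}_nT^\top=U^\top N^\top\mathbb{J}_nNU=-\imath\,U^\top\mathbb{J}_nU=\mathbb{J}_n,\qquad TPT^\top=U^\top N^\top PNU={\rm diag}(\sigma_1I_2,\ldots,\sigma_nI_2)=\Sigma .
\]
Hence $T$ is symplectic and $TPT^\top=\Sigma$, and since $T$ is symplectic $T^{-\top}\mathbb{J}_n=\mathbb{J}_nT$, so $T^{-\top}\mathbb{J}_nPT^\top=\mathbb{J}_n(TPT^\top)=\mathbb{J}_n\Sigma$, while $\mathbb{J}_n\Sigma=\Sigma\mathbb{J}_n$ because each block $\sigma_kI_2$ commutes with $\mathbb{J}$.

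Essentially all of this is bookkeeping --- keeping straight the two roles played by $P_s$ (reordering the entries of $x$ versus reordering eigenvector columns), the ordering conventions relating $\Sigma_D$ to $\Sigma$, and the $2\times2$ algebra of $u$. The one conceptual point, and the place where I would be most careful, is the step $\tilde V^\top\mathbb{J}_n\tilde V=-\imath\mathbb{K}_n$: it is precisely the conjugate-pair structure of the columns of $V$ (equivalently $\tilde V$), combined with the normalization (\ref{eq:V-prop-2}), that forces $T$ to be real and genuinely symplectic rather than merely symplectic over $\mathbb{C}$.
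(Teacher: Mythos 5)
Your proof is correct and follows essentially the same route as the paper's: both rest on Lemma \ref{lem:sym-eig-struct}, the factorization $T^{\top}=P_s^{\top}VP_sU$, the realness of that product, and the $2\times 2$ algebra of $U$. The only organizational difference is that you verify $TPT^{\top}=\Sigma$ by a direct bilinear computation, first converting (\ref{eq:V-prop-2}) into $\tilde V^{\top}\mathbb{J}_n\tilde V=-\imath\mathbb{K}_n$ via the conjugate-pair column structure, whereas the paper establishes the similarity relation $T^{-\top}\mathbb{J}_nPT^{\top}=\mathbb{J}_n\Sigma$ first (using realness of $T^{\top}$ to replace transposes by adjoints) and then invokes Lemma \ref{lem:sym-cong-sim} to recover the congruence; the two orderings are equivalent.
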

\begin{proof}
See Appendix \ref{app:sym-diag}.
\end{proof}

\begin{theorem}
\label{thm:bt-q-lin} Let $G$ be a $n$ degree of freedom linear quantum stochastic system with system matrices $(A,B,C,D)$ with $A$ Hurwitz. Let $P=P^{\top}\geq 0$ and $Q=Q^{\top} \geq 0$ be, respectively, the controllability and observability Gramians of the system which are, respectively, the unique solution to the Lyapunov equations
\begin{eqnarray*}
AP+PA^{\top} + B B^{\top}=0,\\
A^{\top} Q + Q A + C^{\top} C=0.
\end{eqnarray*}
Suppose that $\mathbb{J}_nP$ and $\mathbb{J}_nQ$ are diagonalizable (in particular, whenever $P>0$ and $Q>0$) then the following holds: 

\begin{enumerate}
\item There exists  a symplectic matrix $T$ such that $TPT^{\top}=\Sigma$, $T^{-\top}QT^{-1}=\Sigma$, and $\Sigma= {\rm diag}(\sigma_1 I_2, \sigma_2 I_2,\ldots,\sigma_n I_2)$ for some $\sigma_1,\sigma_2,\ldots,\sigma_n \geq 0$, if and only if $\mathbb{J}_n P= Q \mathbb{J}_n$. In this case, $\sigma_1$, $\sigma_2$, $\ldots$, $\sigma_n$ are the coinciding symplectic eigenvalues of $P$ and $Q$ as well as the Hankel singular values of the system.

\item There exists a symplectic matrix $T$ such that $TPT^{\top}=\Sigma_P$, $T^{-\top}QT^{-1}=\Sigma_Q$, with $\Sigma_X$ ($X \in \{P,Q\}$) of the form $\Sigma_X= {\rm diag}(\sigma_{X,1} I_2, \sigma_{X,2} I_2,\ldots,\sigma_{X,n} I_2)$, with $\sigma_{X,1}$, $\sigma_{X,2}$, $\ldots$, $\sigma_{X,n}\geq 0$ the symplectic eigenvalues of $X$ (symplectic eigenvalues of $P$ need not be the same as those of $Q$), if and only if $[\mathbb{J}_n P,Q \mathbb{J}_n]=0$.  

\item There exists a symplectic matrix $T$ such that $TPT^{\top}=\Sigma_P$, $T^{-\top}QT^{-1}=\Sigma_Q$, for some real positive  semidefinite diagonal matrices $\Sigma_X$ ($X \in \{P,Q\}$),  if and only if there exist symplectic matrices $\tilde T_P$, $\tilde T_Q$, and diagonal symplectic matrices $D_P$ and $D_Q$ such that (i) $\tilde T_P P \tilde T_P^{\top}={\rm diag}(\sigma_{P,1}I_2,\sigma_{P,2}I_2,\ldots,\sigma_{P,n}I_2)$ with $\sigma_{P,1}$, $\sigma_{P,2}$, $\ldots$, $\sigma_{P,n}$ the symplectic eigenvalues of $P$, (ii) $\tilde T_Q^{-\top} Q \tilde T_Q^{-1}={\rm diag}(\sigma_{Q,1}I_2,\sigma_{Q,2}I_2,\ldots,\sigma_{Q,n}I_2)$ with $\sigma_{Q,1}$, $\sigma_{Q,2}$, $\ldots$, $\sigma_{Q,n}$ the symplectic eigenvalues of $Q$, and (iii) $D_P^{-1}\tilde T_P = D_Q \tilde T_Q$. 
\end{enumerate}

\end{theorem}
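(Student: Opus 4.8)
The plan is to run all three parts through a single transformation identity. For a symplectic $T$, writing $S=T^{\top}$ (also symplectic) and using the symplectic relations $S\mathbb{J}_nS^{\top}=S^{\top}\mathbb{J}_nS=\mathbb{J}_n$, a direct computation shows that $TPT^{\top}=\Sigma_P$ and $T^{-\top}QT^{-1}=\Sigma_Q$ are equivalent to
\[
S^{-1}(\mathbb{J}_n P)S=\mathbb{J}_n\Sigma_P,\qquad S^{-1}(Q\mathbb{J}_n)S=\Sigma_Q\mathbb{J}_n .
\]
Thus the single symplectic similarity $S$ simultaneously conjugates the two diagonalizable matrices $\mathbb{J}_n P$ and $Q\mathbb{J}_n$, and the whole theorem becomes a statement about bringing them to prescribed normal forms. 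The one fact I would record at the outset is that $\mathbb{J}_n\Sigma=\Sigma\mathbb{J}_n$ whenever $\Sigma={\rm diag}(\sigma_1I_2,\ldots,\sigma_nI_2)$, since each block $\sigma_iI_2$ commutes with $\mathbb{J}$.

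Necessity in Parts 1 and 2 is then immediate. For Part 1, if $\Sigma_P=\Sigma_Q=\Sigma$ is of block form the two conjugation relations have identical right-hand sides, so $\mathbb{J}_nP=Q\mathbb{J}_n$; the Hankel-singular-value claim follows from $PQ=T^{-1}\Sigma^2T$. For Part 2, block forms give $\mathbb{J}_n\Sigma_P={\rm diag}(\sigma_{P,i}\mathbb{J})$ and $\mathbb{J}_n\Sigma_Q={\rm diag}(\sigma_{Q,i}\mathbb{J})$, which commute blockwise, so $[\mathbb{J}_nP,Q\mathbb{J}_n]=S[\mathbb{J}_n\Sigma_P,\mathbb{J}_n\Sigma_Q]S^{-1}=0$. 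For the sufficiency of Part 1 I would apply Theorem~\ref{thm:sym-diag} to $P$ to get a symplectic $T$ with $TPT^{\top}=\Sigma$ of block form and $T^{-\top}\mathbb{J}_nPT^{\top}=\mathbb{J}_n\Sigma$; substituting $Q=\mathbb{J}_nP\mathbb{J}_n^{-1}$ and using $T^{-\top}\mathbb{J}_nT^{-1}=\mathbb{J}_n$ together with $\mathbb{J}_n\Sigma=\Sigma\mathbb{J}_n$ yields $T^{-\top}QT^{-1}=-\mathbb{J}_n\Sigma\mathbb{J}_n=\Sigma$, the same matrix, and the symplectic-eigenvalue identifications follow as above.

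The heart of the argument, and the step I expect to be the main obstacle, is the sufficiency of Part 2. First I would apply Theorem~\ref{thm:sym-diag} to $P$ to reduce to $P=\Sigma_P={\rm diag}(\sigma_{P,i}I_2)$, replacing $Q$ by a transformed $Q'$ for which the hypothesis becomes $[\mathbb{J}_n\Sigma_P,Q'\mathbb{J}_n]=0$. Expanding this commutator blockwise, the $(i,j)$ block of $Q'\mathbb{J}_n$ must satisfy $\sigma_{P,i}\mathbb{J}(Q'\mathbb{J}_n)_{ij}=\sigma_{P,j}(Q'\mathbb{J}_n)_{ij}\mathbb{J}$, which forces $(Q'\mathbb{J}_n)_{ij}=0$ whenever $\sigma_{P,i}\neq\sigma_{P,j}$; hence $Q'$ is block-diagonal according to the distinct symplectic eigenvalues of $P$. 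On a block where $\sigma_{P,i}$ takes a common value $s>0$, preserving $\Sigma_P$ requires $T_2(sI)T_2^{\top}=sI$, i.e. $T_2$ must be \emph{orthogonal} as well as symplectic, while the commutation additionally forces $Q'$ on that block to commute with $\mathbb{J}_n$. Under the identification $\mathbb{R}^{2g}\cong\mathbb{C}^{g}$ (with $\mathbb{J}$ as multiplication by $\imath$), the block of $Q'$ is then a positive semidefinite Hermitian matrix and the orthogonal symplectic transformations correspond to the unitary group, so the complex spectral theorem supplies a unitary, hence orthogonal symplectic, $T_2$ diagonalizing it to block form ${\rm diag}(\sigma_{Q,i}I_2)$. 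On any block with $s=0$ there is no constraint from $P$ and Theorem~\ref{thm:sym-diag} applies directly to $Q'$ there. Assembling the block transformations gives the required $T=T_2T_1$. The delicate points are the reduction of orthogonal-symplectic congruence to unitary conjugation and checking that diagonalizability of $\mathbb{J}_nQ$ survives the reduction so that the spectral theorem and Theorem~\ref{thm:sym-diag} apply on each block.

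Finally, Part 3 is a bookkeeping equivalence once the diagonal normal forms are understood. For the ``if'' direction I would set $T=D_P^{-1}\tilde T_P=D_Q\tilde T_Q$, which is symplectic, and compute $TPT^{\top}=D_P^{-1}{\rm diag}(\sigma_{P,i}I_2)D_P^{-1}$ and $T^{-\top}QT^{-1}=D_Q^{-1}{\rm diag}(\sigma_{Q,i}I_2)D_Q^{-1}$, both diagonal and positive semidefinite since $D_P,D_Q$ are diagonal. For the ``only if'' direction, given a symplectic $T$ with $TPT^{\top}=\Sigma_P$ and $T^{-\top}QT^{-1}=\Sigma_Q$ diagonal, I would note that $\mathbb{J}_n\Sigma_P$ is block-diagonal with blocks $\left[\begin{smallmatrix}0&\beta_i\\-\alpha_i&0\end{smallmatrix}\right]$ and, being similar to the diagonalizable $\mathbb{J}_nP$, has each such block diagonalizable, which forces $\alpha_i=0\Leftrightarrow\beta_i=0$ within each conjugate pair. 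This lets me define a diagonal symplectic $D_P={\rm diag}(d_i,1/d_i)$ (with $d_i=(\beta_i/\alpha_i)^{1/4}$ on positive pairs, $d_i=1$ on zero pairs) so that $D_P\Sigma_PD_P={\rm diag}(\sigma_{P,i}I_2)$ with $\sigma_{P,i}=\sqrt{\alpha_i\beta_i}$ the symplectic eigenvalues of $P$. Setting $\tilde T_P=D_PT$ gives (i), an analogous $D_Q$ and $\tilde T_Q=D_Q^{-1}T$ gives (ii), and $D_P^{-1}\tilde T_P=T=D_Q\tilde T_Q$ gives (iii).
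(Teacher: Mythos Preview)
Your proposal is correct. For Parts~1 and~3, and for the necessity half of Part~2, your argument is essentially the same as the paper's: both hinge on the observation that for symplectic $T$ the congruences $TPT^{\top}=\Sigma_P$ and $T^{-\top}QT^{-1}=\Sigma_Q$ are equivalent to the similarity relations $T^{-\top}(\mathbb{J}_nP)T^{\top}=\mathbb{J}_n\Sigma_P$ and $T(\mathbb{J}_nQ)T^{-1}=\mathbb{J}_n\Sigma_Q$, together with $\mathbb{J}_n\Sigma=\Sigma\mathbb{J}_n$ for block-scalar $\Sigma$.

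Where you genuinely diverge is the sufficiency of Part~2. The paper argues that since $\mathbb{J}_nP$ and $Q\mathbb{J}_n$ are commuting diagonalizable matrices, they admit a \emph{simultaneous} complex eigenbasis $W$, and then re-runs the construction of Theorem~\ref{thm:sym-diag} (via Lemma~\ref{lem:sym-eig-struct}) on this common $W$ to produce a single real symplectic $T$ that brings both to block-scalar form at once. Your route is instead a two-stage reduction: first apply Theorem~\ref{thm:sym-diag} to $P$ alone, then use the commutation to force the transformed $Q'$ to be block-diagonal along the symplectic eigenspaces of $P$; on each block with positive symplectic eigenvalue you invoke the $\mathbb{R}^{2g}\cong\mathbb{C}^{g}$ identification (orthogonal symplectic $\leftrightarrow$ unitary, symmetric $\mathbb{J}$-commuting $\leftrightarrow$ Hermitian) and the complex spectral theorem, while on a zero block you fall back on Theorem~\ref{thm:sym-diag} applied to $Q'$. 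The paper's approach is more uniform but somewhat sketchy, since one must check that the simultaneous eigenvectors still carry the conjugate-pair structure and normalization~(\ref{eq:V-prop-2}) needed in Theorem~\ref{thm:sym-diag}; your approach avoids re-entering that proof and is more self-contained, at the cost of a case split and the extra bookkeeping you flag (verifying that diagonalizability of $\mathbb{J}_nQ$ passes to each block of $Q'$, which it does because $\mathbb{J}_nQ'=T_1(\mathbb{J}_nQ)T_1^{-1}$ is block-diagonal and hence diagonalizable blockwise).
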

\begin{proof}
See Appendix \ref{app:bt-q-lin}.
\end{proof}

A discussion of the contents of the theorem is now in order. 

Point 1 of the theorem is the best  possible outcome and results in a direct analogue in the quantum case of  balanced realization. This is for two reasons. If $\mathbb{J}_n P = Q \mathbb{J}_n$ is satisfied then the Gramians $P$ and $Q$ can be co-diagonalized to the same diagonal matrix $\Sigma$. Moreover, the diagonal entries of $\Sigma$ come in identical pairs for each pair of conjugate position and momentum operators in the transformed system. This is desirable since when we discard oscillators from the model, we must simultaneusly remove pairs of conjugate position and momentum operators not just one of the pair. If the coefficients of $\Sigma$ were different for the position and momentum operators of the same oscillator, it is not possible to simply remove the operator corresponding to the larger value of the corresponding diagonal element of $\Sigma$. However, this ideal scenario is only achievable under the extremely restrictive condition that $\mathbb{J}_n P=Q\mathbb{J}_n$. Generic linear quantum stochastic systems will not satisfy this condition. Indeed, it is easy to generate random examples of linear quantum stochastic systems that will fail to meet this condition.

Point 2 of the theorem shows that it is possible to have co-diagonalization of $P$ and $Q$ to diagonal matrices $\Sigma_X$ of the form $\Sigma_X={\rm diag}(\sigma_{X,1} I_2,\sigma_{X,2} I_2,\ldots, \sigma_{X,n} I_2)$, $X\in\{P,Q\}$, but $\Sigma_P$ and $\Sigma_Q$ will not necessarily coincide. This weaker co-diagonalization is achievable under the weaker requirement (compared to the requirement of Point 1) that $[\mathbb{J}_n P,Q \mathbb{J}_n]=0$. Since $\Sigma_P$ and $\Sigma_Q$ need not coincide, their diagonal elements may not be ordered in the same way.  However, it will be shown in the next section that for quasi-balancesable system there is a natural strategy to truncate subsystems. Moreover, as will be demonstrated in a forthcoming example in Section \ref{sec:cp-systems-bt}, there exists a class of linear quantum stochastic systems that have a quasi-balanced realization.

Point 3 of the theorem is the weakest possible co-diagonalization result for $P$ and $Q$. This form of diagonalization can be achieved under a weaker condition than that of Points 1 and 2. It states that $P$ and $Q$ can be co-diagonalized by a symplectic matrix to, respectively, diagonal matrices $\Sigma_P$ and $\Sigma_Q$ which need not have the special form  stipulated in Points 1 and 2.  

\section{Truncation error bound in model reduction of quasi-balanceable systems}
\label{sec:truncation-error-bound}
In this section we shall derive a bound on the  magnitude of the error transfer function due to subsystem truncation of a quasi-balanceable linear quantum stochastic system. The error bound will be presented in Theorem \ref{thm:qb-error-bound} of this section. Let us introduce the notation  $\bar{\sigma}(\cdot)$ and $\lambda_{\rm max}(\cdot)$ to denote the largest singular value and eigenvalue of a matrix, respectively, with the matrix being square for the latter, and recall that the $H^{\infty}$ norm of a transfer function $H(s)$ is $\|H \|_{\infty}=\mathop{\sup}_{\omega \in \mathbb{R}} \bar{\sigma}(H(\imath \omega))$. We begin with the following lemma.

\begin{lemma}
\label{lm:zero-error} Let $G=(A,B,C,D)$ be a linear quantum stochastic system of degree $n$ with $A$ Hurwitz, $\mathbb{J}_n P$ and $\mathbb{J}_n Q$ diagonalizable, and $[\mathbb{J}_n P,Q\mathbb{J}_n]=0$. Let  $\Xi_G(s)=C(sI-A)^{-1}B+D$ be the transfer function of $G$, 
and let $T$ be a symplectic transformation such that $\tilde G =(TAT^{-1},TB,CT^{-1},D)$ is a quasi-balanced linear quantum stochastic system with diagonal positive semidefinite controllability and observability Gramians $\Sigma_P={
\rm diag}(\sigma_{P,1}I_{2},\sigma_{P,2}I_{2},\ldots,\sigma_{P,n}I_{2})$ and $\Sigma_Q={\rm diag}(\sigma_{Q,1}I_{2},\sigma_{Q,2}I_{2},\ldots,\sigma_{Q,n}I_{2})$, respectively. Partition the Gramian $\Sigma_X$ ($X \in \{Q,P\}$) as $\Sigma_X ={\rm diag}(\sigma_{X,r1},\sigma_{X,r2})$ with $\sigma_{X,r1} \in \mathbb{R}^{2r \times 2r}$ and $r<n$, and partition $\tilde A =TAT^{-1}$, $\tilde B=TB$, and $\tilde C = CT^{-1}$ compatibly as
$$
\tilde A=\left[\begin{array}{cc} \tilde A_{r,11} & \tilde A_{r,12} \\ \tilde A_{r,21} & \tilde A_{r,22} \end{array}\right];\, 
\tilde B=\left[\begin{array}{c} \tilde B_{r,1} \\ \tilde B_{r,2} \end{array}\right];\, \tilde C=\left[\begin{array}{cc} \tilde C_{r,1} & \tilde C_{r,2} \end{array}\right].
$$
Let $\Xi_{\tilde G_r}(s)= \tilde C_{r,1}(sI-\tilde A_{r,11})^{-1} \tilde B_{r,1}+D$. If $\tilde A_{r,11}$ is Hurwitz then for all $\omega \in \mathbb{R}$ 
\begin{eqnarray*}
\lefteqn{\bar{\sigma} (\Xi_G(\imath\omega)-\Xi_{\tilde G_{r}}(\imath\omega) )}\\
&=&\sqrt{\lambda_{\rm \max}((\Sigma_{P,r2}+\Delta_r(\imath \omega)^{-1}\Sigma_{P,r2}\Delta_r(\imath \omega)^{*})( \Delta_r(\imath \omega)^{-*}\Sigma_{Q,r2}\Delta_r(\imath\omega) + \Sigma_{Q,r2}))},
\end{eqnarray*}
with $\Delta_r(s)=sI-\tilde A_{r,22}-\tilde A_{r,21}(sI-\tilde A_{r,11})^{-1}\tilde A_{r,12}$. In particular, if either of, or both of, $P$ and $Q$ are singular with ${\rm rank}(PQ)=2\nu<2n$, and  $T$ has been chosen  such that $\Sigma_{P,\nu 1}\Sigma_{Q, \nu 1} >0$\footnote{If $T$ does not already satisfy this then pairs of consecutive odd and even indexed rows of $T$ can always be permuted to  get a new symplectic $T$ that does satisfy it to replace the original $T$.}, and    
$\tilde A_{r,11}$ is Hurwitz for $r=\nu,\nu+1,\ldots,n-1$, then $\|\Xi_{G}-\Xi_{\tilde G_{\nu}}\|_{\infty}=0$.
\end{lemma}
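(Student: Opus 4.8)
The plan is to compute the error transfer function $\Xi_G(s) - \Xi_{\tilde G_r}(s)$ explicitly using the partitioned realization of $\tilde G$, and then evaluate its largest singular value by relating it to the Gramian structure. Since similarity transformations preserve transfer functions, $\Xi_G(s) = \Xi_{\tilde G}(s) = \tilde C(sI - \tilde A)^{-1}\tilde B + D$, so the error is $\Xi_G(s) - \Xi_{\tilde G_r}(s) = \tilde C(sI-\tilde A)^{-1}\tilde B - \tilde C_{r,1}(sI - \tilde A_{r,11})^{-1}\tilde B_{r,1}$. The first step is to apply the standard block matrix inversion (Schur complement) formula to $(sI - \tilde A)^{-1}$ with the $(r,11),(r,12),(r,21),(r,22)$ blocks, introducing $\Delta_r(s) = sI - \tilde A_{r,22} - \tilde A_{r,21}(sI - \tilde A_{r,11})^{-1}\tilde A_{r,12}$ as the relevant Schur complement. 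Collecting terms, the error collapses to an expression of the form $\tilde E(s) = F_1(s)\Delta_r(s)^{-1}F_2(s)$ for suitable matrices $F_1,F_2$ built from $\tilde A_{r,21},\tilde A_{r,12},\tilde B_{r,1},\tilde B_{r,2},\tilde C_{r,1},\tilde C_{r,2}$ and the resolvent $(sI-\tilde A_{r,11})^{-1}$; this is a routine but slightly lengthy algebraic manipulation of the kind familiar from classical balanced truncation error derivations.

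The key idea for obtaining the clean closed form is then to use the quasi-balanced structure to identify the factors. The partitioned Lyapunov equations $\tilde A \Sigma_P + \Sigma_P \tilde A^\top + \tilde B\tilde B^\top = 0$ and $\tilde A^\top \Sigma_Q + \Sigma_Q \tilde A + \tilde C^\top \tilde C = 0$, restricted to their block components, give algebraic relations among $\tilde A_{r,ij}, \tilde B_{r,i}, \tilde C_{r,i}$ and the blocks $\Sigma_{P,r1}, \Sigma_{P,r2}, \Sigma_{Q,r1}, \Sigma_{Q,r2}$. In particular, the $(2,2)$-block equations read $\tilde A_{r,22}\Sigma_{P,r2} + \Sigma_{P,r2}\tilde A_{r,22}^\top + \tilde B_{r,2}\tilde B_{r,2}^\top = 0$ and $\tilde A_{r,22}^\top \Sigma_{Q,r2} + \Sigma_{Q,r2}\tilde A_{r,22} + \tilde C_{r,2}^\top \tilde C_{r,2} = 0$, and the off-diagonal blocks relate $\tilde A_{r,12}, \tilde A_{r,21}$ to $\tilde B_{r,1}\tilde B_{r,2}^\top$, $\tilde C_{r,1}^\top\tilde C_{r,2}$ and the $\Sigma$ blocks. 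Substituting these into $F_1(s)$ and $F_2(s)$ should show that, after forming $\tilde E(\imath\omega)\tilde E(\imath\omega)^*$, the resolvent pieces $(\imath\omega I - \tilde A_{r,11})^{-1}$ combine with $\Delta_r$ so that $\bar\sigma(\tilde E(\imath\omega))^2 = \lambda_{\max}\big(\tilde E(\imath\omega)\tilde E(\imath\omega)^*\big)$ equals the stated quantity $\lambda_{\max}\big((\Sigma_{P,r2} + \Delta_r^{-1}\Sigma_{P,r2}\Delta_r^{*})(\Delta_r^{-*}\Sigma_{Q,r2}\Delta_r + \Sigma_{Q,r2})\big)$. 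The fact that $\Sigma_{P,r2}, \Sigma_{Q,r2}$ are scalar multiples of identities on each $2\times 2$ conjugate-pair block will be used to commute them past other factors when reorganizing the expression. I expect this substitution and regrouping to be the main obstacle: getting the resolvent factors to cancel correctly against $\Delta_r$ and its adjoint, and verifying that all the $\tilde A_{r,11}$-dependence disappears into $\Delta_r$, requires careful bookkeeping with the Lyapunov block relations.

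For the final assertion, suppose $\mathrm{rank}(PQ) = 2\nu < 2n$ and $T$ is chosen so that $\Sigma_{P,\nu 1}\Sigma_{Q,\nu 1} > 0$, i.e., the first $\nu$ conjugate-pair blocks carry all the ``effective'' Hankel content and the remaining blocks satisfy $\sigma_{P,k}\sigma_{Q,k} = 0$ for $k = \nu+1,\ldots,n$ (this is possible by the footnoted row-permutation argument, which preserves symplecticity). The plan is to apply the error formula successively for $r = n-1, n-2, \ldots, \nu$: at each stage the truncated index block $k = r+1$ has either $\sigma_{P,r+1} = 0$ or $\sigma_{Q,r+1} = 0$, hence $\Sigma_{P,r2}\Sigma_{Q,r2}$ involves a product with a zero factor, making the argument of $\lambda_{\max}(\cdot)$ in the error formula vanish. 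Therefore $\|\Xi_{\tilde G_{r+1}} - \Xi_{\tilde G_r}\|_\infty = 0$ for each such $r$, and chaining these equalities (using the hypothesis that $\tilde A_{r,11}$ is Hurwitz for $r = \nu,\ldots,n-1$ so each intermediate truncation is well-defined and stable) together with $\Xi_G = \Xi_{\tilde G_n}$ yields $\|\Xi_G - \Xi_{\tilde G_\nu}\|_\infty = 0$ by the triangle inequality. A small point to check here is that when only one of $\sigma_{P,r+1}, \sigma_{Q,r+1}$ is zero the relevant product $(\Sigma_{P,r2} + \cdots)(\cdots + \Sigma_{Q,r2})$ still vanishes because both summands in at least one of the two factors carry that zero scalar; since $\Sigma_{P,r2}$ and $\Sigma_{Q,r2}$ in the one-block truncation step are $2\times 2$ scalar matrices, this is immediate.
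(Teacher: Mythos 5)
Your proposal is correct and follows essentially the same route as the paper: the exact singular-value expression is obtained by the classical Enns-type computation (block inversion via the Schur complement $\Delta_r$ plus the partitioned Lyapunov identities, which the paper simply cites from Enns' derivation rather than reproducing), and the zero-error claim is obtained by iterating the one-block truncation from $r=n-1$ down to $\nu$ and chaining with the triangle inequality, exactly as in the paper. Your extra check that the product vanishes when only one of $\sigma_{P,r+1},\sigma_{Q,r+1}$ is zero is a correct and worthwhile detail.
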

\begin{proof}
The expression for $\bar{\sigma}(\Xi_G(\imath \omega)-\Xi_{\tilde G_r}(\imath \omega))$ in the lemma follows mutatis mutandis from the derivation in Section 3 of \cite{Enns84}, with the obvious modifications. Now, by the hypothesis of the latter part of the lemma on $P$, $Q$, and $T$, we have that $\Sigma_{P,r2}\Sigma_{Q,r2}=0$ for all $r=\nu+1,\nu+2,\ldots,n-1$. Taking $r=n-1$, by the hypothesis that $\tilde A_{n-1,11}$ is Hurwitz we then get that $\bar{\sigma}(\Xi_G(\imath \omega)-\Xi_{\tilde G_{n-1}}(\imath \omega))=0$ for all $\omega$, therefore $\| \Xi_G -\Xi_{\tilde G_{n-1}}\|_{\infty}=0$. Since $\Xi_{\tilde G_{n-1}}$ has again, by construction, a quasi-balanced realization, the assumption that $\tilde A_{n-2,11}$ is Hurwitz implies analogously that $\|\Xi_{\tilde G_{n-1}} -\Xi_{\tilde G_{n-2}}\|_{\infty}=0$. Repeating this argument for $r=n-3,n-2,\ldots,\nu+1$, we obtain $\|\Xi_{\tilde G_{r}}-\Xi_{\tilde G_{r-1}}\|_{\infty}=0$ for $r=n-3,n-2,\ldots,\nu+1$. Therefore, with $\Xi_{\tilde G_n}=\Xi_G$,
$$
\| \Xi_G-\Xi_{\tilde G_{\nu}}\|_{\infty} =  \left\| \sum_{r=\nu+1}^{n}(\Xi_{\tilde G_r}  - \Xi_{\tilde G_{r-1}})\right\|_{\infty}\leq \sum_{r=\nu+1}^{n} \|\Xi_{\tilde G_r}  - \Xi_{\tilde G_{r-1}}\|_{\infty}=0.
$$
$\hfill \Box$
\end{proof}

The above lemma states that we can always discard subsystems corresponding to position and momentum pairs in the quasi-balanced realization that correspond to vanishing products $\sigma_{P,r} \sigma_{Q,r}$ without incurring any approximation error, provided the  submatrices  $\tilde A_{\nu,11},\tilde A_{\nu+1,11},\ldots, \tilde A_{n-1,11}$ are all Hurwitz, where $2\nu$ is the rank of $PQ$. Therefore, to simplify the exposition, from this point on we consider only the case where $G$ is   {\em minimal} in the usual sense that  $(A,B)$ is a controllable pair (i.e., the matrix $[\begin{array}{cccc} B & AB & \ldots & A^{2n-1}B\end{array}]$ is full rank) and $(A,C)$ is an observable  pair (i.e., the matrix $[\begin{array}{cccc} C^{\top} & A^{\top} C^{\top} & \ldots & (A^{2n-1})^{\top}C^{\top} \end{array}]$ is full rank). In this case we will have that $P>0$, $Q>0$ and  $PQ$ is nonsingular. We now show that when $[\mathbb{J}_n P, Q \mathbb{J}_n]=0$, a quasi-balanced realization of $G$ is similar to a non-physically realizable balanced realization of $G$ by a simple diagonal similarity transformation. This is stated precisely in the next lemma.
\begin{lemma}
\label{lm:qb-to-b} Let $\tilde G=(\tilde A,\tilde B,\tilde C,D)$ be a quasi-balanced realization of  $G=(A,B,C,D)$ as defined in Lemma \ref{lm:zero-error}, and suppose that $G$ is minimal. Let $T_b ={\rm diag}(T_{b,1}I_2,T_{b,2}I_2,\ldots,T_{b,n}I_2)$ be a diagonal matrix with $T_{b,j}=\left(\frac{\sigma_{Q,j}}{\sigma_{P,j}}\right)^{1/4}$. Then $\tilde G_b(s)=(\tilde A_b,\tilde B_b,\tilde C_b,D)$ with $\tilde A_b=T_b \tilde A T_b^{-1}$,  $\tilde B_b=T_b \tilde B$, and $\tilde C_b=\tilde C T_b^{-1}$, is a non-physically realizable balanced realization of $G$ (in particular, $\Xi_{\tilde G_b}(s)=\Xi_G(s)$) with diagonal and identical controllability and observability Gramians, $\Sigma_{P,b}=\Sigma_{Q,b}=\Sigma_b$, with $$\Sigma_b={\rm diag}(\sqrt{\sigma_{P,1}\sigma_{Q,1}}I_2,\sqrt{\sigma_{P,2}\sigma_{Q,2}}I_2,\ldots, \sqrt{\sigma_{P,n}\sigma_{Q,n}}I_2),$$  where $\Sigma_{P,b}$ and $\Sigma_{Q,b}$ denote the controllability and observability Gramians of $\tilde G_b$, respectively. Moreover, let $\tilde A$, $\tilde B$, $\tilde C$ be partitioned according to Lemma \ref{lm:zero-error}, and partition $\tilde A_b$, $\tilde B_b$, $\tilde C_b$ compatibly as
$$
\tilde A_b=\left[\begin{array}{cc} \tilde A_{b,r,11} & \tilde A_{b,r,12} \\ \tilde A_{b,r,21} & \tilde A_{b,r,22} \end{array}\right];\, 
\tilde B_b=\left[\begin{array}{c} \tilde B_{b,r,1} \\ \tilde B_{b,r,2} \end{array}\right];\, \tilde C_b=\left[\begin{array}{cc} \tilde C_{b,r,1} & \tilde C_{b,r,2} \end{array}\right],
$$
 and define $\Xi_{\tilde G_{b,r}}(s) = \tilde C_{b,r,1}(sI-\tilde A_{b,r,11})^{-1} \tilde B_{b,r,1}+D$, then
\begin{equation}
 \Xi_G(s) -\Xi_{\tilde G_r}(s) = \Xi_{G_b}(s) -\Xi_{\tilde G_{b,r}}(s), \label{eq:qb-n-b-equal-error}
\end{equation}
where $\tilde G_r$ is as defined in Lemma \ref{lm:zero-error}.
\end{lemma}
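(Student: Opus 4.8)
The plan is to verify each assertion of Lemma~\ref{lm:qb-to-b} by direct computation, exploiting the block-scalar structure $\Sigma_{P}={\rm diag}(\sigma_{P,1}I_2,\ldots,\sigma_{P,n}I_2)$ and $\Sigma_{Q}={\rm diag}(\sigma_{Q,1}I_2,\ldots,\sigma_{Q,n}I_2)$ of the quasi-balanced realization $\tilde G$ coming from Lemma~\ref{lm:zero-error}. First I would record that since $G$ is minimal we have $P>0$ and $Q>0$, so all symplectic eigenvalues $\sigma_{P,j},\sigma_{Q,j}$ are strictly positive and $T_{b,j}=(\sigma_{Q,j}/\sigma_{P,j})^{1/4}$ is a well-defined positive real number; hence $T_b={\rm diag}(T_{b,1}I_2,\ldots,T_{b,n}I_2)$ is invertible (though not symplectic in general, which is exactly why $\tilde G_b$ need not be physically realizable). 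Because $T_b$ is a diagonal similarity transformation it leaves the transfer function unchanged, $\Xi_{\tilde G_b}(s)=\Xi_{\tilde G}(s)=\Xi_G(s)$, so $\tilde G_b$ is indeed a realization of $G$.

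Next I would compute the Gramians of $\tilde G_b$. From the Lyapunov equation $\tilde A\Sigma_P+\Sigma_P\tilde A^{\top}+\tilde B\tilde B^{\top}=0$, conjugating by $T_b$ gives $\tilde A_b(T_b\Sigma_PT_b^{\top})+(T_b\Sigma_PT_b^{\top})\tilde A_b^{\top}+\tilde B_b\tilde B_b^{\top}=0$, so the controllability Gramian of $\tilde G_b$ is $\Sigma_{P,b}=T_b\Sigma_PT_b^{\top}={\rm diag}(T_{b,j}^2\sigma_{P,j}I_2)_{j=1}^n$. Since $T_{b,j}^2=(\sigma_{Q,j}/\sigma_{P,j})^{1/2}$ we get $T_{b,j}^2\sigma_{P,j}=\sqrt{\sigma_{P,j}\sigma_{Q,j}}$, hence $\Sigma_{P,b}=\Sigma_b$. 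Dually, from $\tilde A^{\top}\Sigma_Q+\Sigma_Q\tilde A+\tilde C^{\top}\tilde C=0$, conjugating by $T_b^{-1}$ yields observability Gramian $\Sigma_{Q,b}=T_b^{-\top}\Sigma_QT_b^{-1}={\rm diag}(T_{b,j}^{-2}\sigma_{Q,j}I_2)_{j=1}^n$, and $T_{b,j}^{-2}\sigma_{Q,j}=\sqrt{\sigma_{P,j}\sigma_{Q,j}}$, so $\Sigma_{Q,b}=\Sigma_b$ as well; thus $\tilde G_b$ is balanced with the stated $\Sigma_b$.

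Finally, for the error-equality \eqref{eq:qb-n-b-equal-error} I would observe that because $T_b$ is block-diagonal with the $j$-th block a scalar multiple of $I_2$, it respects the partition into the first $r$ oscillators and the last $n-r$ oscillators: writing $T_b={\rm diag}(T_{b,I},T_{b,II})$ with $T_{b,I}\in\mathbb{R}^{2r\times2r}$, $T_{b,II}\in\mathbb{R}^{2(n-r)\times2(n-r)}$ both diagonal and invertible, one has $\tilde A_{b,r,11}=T_{b,I}\tilde A_{r,11}T_{b,I}^{-1}$, $\tilde B_{b,r,1}=T_{b,I}\tilde B_{r,1}$, $\tilde C_{b,r,1}=\tilde C_{r,1}T_{b,I}^{-1}$. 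Hence $\tilde C_{b,r,1}(sI-\tilde A_{b,r,11})^{-1}\tilde B_{b,r,1}=\tilde C_{r,1}T_{b,I}^{-1}(sI-T_{b,I}\tilde A_{r,11}T_{b,I}^{-1})^{-1}T_{b,I}\tilde B_{r,1}=\tilde C_{r,1}(sI-\tilde A_{r,11})^{-1}\tilde B_{r,1}$, so $\Xi_{\tilde G_{b,r}}(s)=\Xi_{\tilde G_r}(s)$; combined with $\Xi_{G_b}(s)=\Xi_G(s)$ from the first step, \eqref{eq:qb-n-b-equal-error} follows. The only mild subtlety, and the step I would be most careful about, is keeping the conjugation directions straight — the controllability Gramian transforms by $T_b(\cdot)T_b^{\top}$ while the observability Gramian transforms by $T_b^{-\top}(\cdot)T_b^{-1}$ — and checking that the fourth-root choice of $T_{b,j}$ is precisely the one that makes both transformed Gramians equal to $\sqrt{\sigma_{P,j}\sigma_{Q,j}}I_2$ simultaneously; everything else is routine.
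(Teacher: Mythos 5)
Your proof is correct and follows essentially the same route as the paper's: verify that $T_b$ maps $\Sigma_P$ and $\Sigma_Q$ to the common diagonal matrix $\Sigma_b$ (the paper phrases this via the composite congruence $\tilde T_b = T_bT$ applied to $P$ and $Q$, you equivalently conjugate the Lyapunov equations of $\tilde G$ directly), and then use the block-diagonal structure of $T_b$ to show the truncated subsystems of $\tilde G$ and $\tilde G_b$ are similar, so the two error transfer functions coincide. The one point you make explicit that the paper leaves implicit — that minimality forces $\sigma_{P,j},\sigma_{Q,j}>0$ so the fourth roots are well defined — is a worthwhile addition, but the argument is otherwise the same.
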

\begin{proof}
Note that from the given definitions of $T_b$ and $\Sigma_b$, $T_b$ is invertible and we easily verify that  $T_b \Sigma_P T_b^{\top}=\Sigma_b$ and  $T_b^{-\top} \Sigma_Q T_b^{-1}=\Sigma_b$. 
Since $TPT^{\top}=\Sigma_P$ and $T^{-\top} QT^{-1}=\Sigma_Q$,  defining $\tilde T_b=T_b T$ it follows that $\tilde T_b P \tilde T_b^{\top} = T_b \Sigma_P T_b^{\top}=\Sigma_b$ and $\tilde T_b^{-\top} Q\tilde T_b^{-1}=T_b^{-\top}\Sigma_Q T_b^{-1}=\Sigma_b$. 
Hence the system $\tilde G_b$ as defined in the lemma is similar to $G$ (via the transformation $\tilde T_b$) and has balanced Gramians $\Sigma_{P,b}=\Sigma_{Q,b}=\Sigma_b$,  therefore it is a balanced realization of $\Xi_G(s)$, although it is not physically realizable. Since $T_b$ is a diagonal matrix, we can partition it  conformably with the partitioning of $\tilde A_b$, $\tilde B_b$ and $\tilde C_b$ as given in the lemma as $T_b={\rm diag}(T_{b,r1},T_{b,r2})$ with $T_{b,r1}$ a diagonal and invertible $2r \times 2r$ matrix. By the diagonal form of $T_b$ we easily verify that $\tilde A_{b,r,11}= T_{b,r1} \tilde A_{r,11}T_{b,r1}^{-1}$, $\tilde B_{b,r,1}= T_{b,r1} \tilde B_{r,1}$, $\tilde C_{b,r,1}= \tilde C_{r,1}T_{b,r1}^{-1}$, and we  conclude that $\tilde G_{b,r}=(\tilde A_{b,r,11},\tilde B_{b,r,1}, \tilde C_{b,r,1},D)$ is similar to $\tilde G_r=(\tilde A_{r,11},\tilde B_{r,1}, \tilde C_{r,1},D)$ (via the transfomation $T_{b,r1}$) and thus $\Xi_{\tilde G_{b,r}}(s)=\Xi_{\tilde G_{r}}(s)$. From this and the fact established earlier that $\Xi_G(s)=\Xi_{\tilde G_b}(s)$, (\ref{eq:qb-n-b-equal-error}) therefore holds. $\hfill \Box$
\end{proof}
 
The identity (\ref{eq:qb-n-b-equal-error}) together with the fact that $\tilde G_b$ is a balanced realization of $\Xi_G(s)$ (although not physically realizable) allows us to immediately obtain bounds for the approximation error $\| \Xi_G -\Xi_{\tilde G_r}\|_{\infty}$ using standard proofs for results on error bounds for truncation of balanced realizations of  classical linear systems, see, e.g., \cite[Theorem 7.3]{ZDG95}. This is stated as the following theorem. 

\begin{theorem}
\label{thm:qb-error-bound} Let $G=(A,B,C,D)$ be a minimal linear quantum stochastic system of degree $n$ with $A$ Hurwitz, $\mathbb{J}_n P$ and $\mathbb{J}_n Q$ diagonalizable, and $[\mathbb{J}_n P,Q\mathbb{J}_n]=0$. Let  $\Xi_G(s)=C(sI-A)^{-1}B+D$ be the transfer function of $G$, and let $T$ be a symplectic transformation such that $\tilde G =(TAT^{-1},TB,CT^{-1},D)$ is a quasi-balanced linear quantum stochastic system 
 with diagonal positive definite controllability and observability Gramians $\Sigma_P$ and $\Sigma_Q$, respectively, 
and $\Sigma_b=(\Sigma_{P}\Sigma_{Q})^{1/2}={\rm diag}(\sigma_{b,1} I_{2i_1},\sigma_{b,2}I_{2i_2},\ldots,\sigma_{b,\mu}I_{2i_{\mu}})$ with $\sigma_{b,1}>\sigma_{b,2}>\ldots>\sigma_{b,\mu}>0$ for some positive integers $\mu\leq n$ and $i_1,i_2,\ldots,i_{\mu}$ such that $\sum_{r=1}^{\mu} i_r =n$. Let $\tilde A_{r,11}$, $\tilde G_{r}$, and $\Sigma_{X,r1}$ ($X \in \{Q,P\}$) be as defined in Lemma \ref{lm:zero-error}, and  let $j_r = \sum_{k=1}^r i_k$. Then for any $r < \mu$, $\tilde A_{j_r,11}$ is Hurwitz, and
$$
\| \Xi_G  - \Xi_{\tilde G_{j_r}}  \|_{\infty} \leq 2 \sum_{k=r+1}^{\mu} \sigma_{b,k},
$$
with the bound being achieved for $r=\mu-1$: $\| \Xi_G  - \Xi_{\tilde G_{j_{\mu-1}}} \|_{\infty}=\sigma_{b,\mu}$.
\end{theorem}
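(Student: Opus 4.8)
The plan is to reduce the claim to the classical balanced truncation error bound by exploiting Lemma~\ref{lm:qb-to-b}. First I would invoke that lemma: since $G$ is minimal, $P>0$ and $Q>0$, so $\Sigma_P,\Sigma_Q>0$, the diagonal matrix $T_b$ is well defined and invertible, and $\tilde G_b=(\tilde A_b,\tilde B_b,\tilde C_b,D)$ (obtained via $\tilde T_b=T_bT$) is an in general non-physically-realizable \emph{balanced} realization of $\Xi_G$, with equal Gramians $\Sigma_{P,b}=\Sigma_{Q,b}=\Sigma_b=(\Sigma_P\Sigma_Q)^{1/2}$. Crucially, Lemma~\ref{lm:qb-to-b} also supplies the identity $\Xi_G(s)-\Xi_{\tilde G_r}(s)=\Xi_{G_b}(s)-\Xi_{\tilde G_{b,r}}(s)$ for every $r$, together with the block relations $\tilde A_{b,r,11}=T_{b,r1}\tilde A_{r,11}T_{b,r1}^{-1}$, $\tilde B_{b,r,1}=T_{b,r1}\tilde B_{r,1}$, $\tilde C_{b,r,1}=\tilde C_{r,1}T_{b,r1}^{-1}$, so that $\tilde A_{r,11}$ is Hurwitz if and only if $\tilde A_{b,r,11}$ is. Hence it suffices to prove the bound for the purely classical problem of truncating the asymptotically stable balanced realization $\tilde G_b$ at index $j_r$, and then transfer the conclusions back through these relations and the error identity.

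Second, I would check that truncating $\tilde G_b$ at $j_r$ is a ``clean'' balanced truncation, i.e. it does not split a repeated Hankel singular value. By hypothesis $\Sigma_b={\rm diag}(\sigma_{b,1}I_{2i_1},\ldots,\sigma_{b,\mu}I_{2i_{\mu}})$ with $\sigma_{b,1}>\sigma_{b,2}>\cdots>\sigma_{b,\mu}>0$, and $j_r=\sum_{k=1}^r i_k$, so the leading $2j_r\times 2j_r$ block $\Sigma_{b,j_r1}$ equals ${\rm diag}(\sigma_{b,1}I_{2i_1},\ldots,\sigma_{b,r}I_{2i_r})$ while the trailing block is ${\rm diag}(\sigma_{b,r+1}I_{2i_{r+1}},\ldots,\sigma_{b,\mu}I_{2i_{\mu}})$; these share no common diagonal entry and $\Sigma_{b,j_r1}>0$. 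This is precisely the situation in which the classical theory applies. Invoking the classical balanced truncation results (e.g. \cite[Theorem~7.3]{ZDG95}, whose proof rests on the frequency-domain error expression of \cite{Enns84} already used in Lemma~\ref{lm:zero-error}): the leading block $\tilde A_{b,j_r,11}$ is Hurwitz, $\tilde G_{b,j_r}$ is again a balanced realization with Gramian $\Sigma_{b,j_r1}$, and $\|\Xi_{G_b}-\Xi_{\tilde G_{b,j_r}}\|_{\infty}\le 2\sum_{k=r+1}^{\mu}\sigma_{b,k}$, the bound obtained by discarding one distinct Hankel-singular-value level at a time, each single-level removal contributing at most $2\sigma_{b,k}$ and the errors adding up by the triangle inequality. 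Combining with the error identity of Lemma~\ref{lm:qb-to-b} and transferring Hurwitzness via $\tilde A_{b,j_r,11}=T_{b,j_r1}\tilde A_{j_r,11}T_{b,j_r1}^{-1}$ yields $\tilde A_{j_r,11}$ Hurwitz and $\|\Xi_G-\Xi_{\tilde G_{j_r}}\|_{\infty}\le 2\sum_{k=r+1}^{\mu}\sigma_{b,k}$.

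Third, for the sharp case $r=\mu-1$ I would use the refined single-level statement: when the discarded part of a balanced realization corresponds to a single repeated Hankel singular value $\sigma_{b,\mu}I_{2i_{\mu}}$, the Enns error expression of Lemma~\ref{lm:zero-error}, now applied to $\tilde G_b$ with $\Sigma_{P,r2}=\Sigma_{Q,r2}=\sigma_{b,\mu}I$, collapses to $\bar\sigma(\Xi_{G_b}(\imath\omega)-\Xi_{\tilde G_{b,j_{\mu-1}}}(\imath\omega))=\sigma_{b,\mu}$ for all $\omega\in\mathbb{R}$, a standard computation for balanced truncation. Hence $\|\Xi_G-\Xi_{\tilde G_{j_{\mu-1}}}\|_{\infty}=\|\Xi_{G_b}-\Xi_{\tilde G_{b,j_{\mu-1}}}\|_{\infty}=\sigma_{b,\mu}$; note this specializes, at $\sigma_{b,\mu}=0$, to the zero-error conclusion of Lemma~\ref{lm:zero-error}.

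The main obstacle is not any hard estimate but making the reduction airtight: one must confirm that the classical balanced truncation theorems are legitimately applied to $\tilde G_b$ even though it is not physically realizable (this is fine, since those theorems use only the asymptotic stability and the balanced Lyapunov structure, never physical realizability), and one must track the index correspondence so that a cut at $j_r$ position–momentum pairs of the quantum realization translates to a cut at $2j_r$ scalar states of $\tilde G_b$ that respects the multiplicity blocks $2i_k$ of $\Sigma_b$. Everything else is a direct appeal to \cite{ZDG95,Enns84} through the bridge provided by Lemmas~\ref{lm:zero-error} and~\ref{lm:qb-to-b}.
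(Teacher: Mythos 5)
Your proposal is correct and follows essentially the same route as the paper: the paper itself gives no separate proof of Theorem \ref{thm:qb-error-bound}, stating only that it follows immediately from the identity (\ref{eq:qb-n-b-equal-error}) of Lemma \ref{lm:qb-to-b} together with the standard classical balanced-truncation error bounds of \cite[Theorem 7.3]{ZDG95}. You have merely made explicit the details the paper leaves to the reader (the non-splitting of multiplicity blocks at $j_r$, the transfer of Hurwitzness through the diagonal similarity $T_{b,j_r1}$, and the single-level equality case), all of which are consistent with the paper's intended argument.
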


The error bound given by the above theorem gives a recipe for truncating the subsystems in a quasi-balanced realization of $G$. That is, one should truncate those subsystems in $\tilde G$ associated with position-momentum operator pairs that correspond to pairs $(\sigma_{P,r},\sigma_{Q,r})$ with the smallest geometric means $\sqrt{\sigma_{P,r}\sigma_{Q,r}}$. Furthermore, since $\tilde G_b$ is a balanced realization of $\Xi_G$, it turns out, rather nicely, that for quasi-balanced realizations of linear quantum stochastic systems these geometric means  in fact coincide with the Hankel singular values of $G$.

\section{Quasi-balanced truncation of completely passive linear quantum stochastic systems}
\label{sec:cp-systems-bt}
We now consider model reduction for  the special class of completely passive linear quantum stochastic systems as defined in Sec. \ref{sec:cp-systems}. The key result in this section is that members of this distinguished class have the property that, provided the $A$ matrix is Hurwitz, they always satisfy Point 2 of Theorem  \ref{thm:bt-q-lin} and thus always have a quasi-balanced realization. Therefore,   subsystem truncation can be performed on quasi-balanced realizations of this class of systems by removing subsystems associated with the smallest geometric means of the product of the diagonal controllability and observability Gramians,  with an error bound  given by Theorem \ref{thm:qb-error-bound}.

It has been shown in \cite{Nurd10b} that for completely passive systems the matrix $R$ has the block form $R=[R_{jk}]_{j,k=1,2,\ldots,n}$, where $R_{jk}$ is a $2 \times 2$ diagonal matrix of the form $R_{jk}=r_{jk} I_2$ for some $r_{jk} \in \mathbb{R}$. Also, if $\tilde K=[\tilde K_{ij}]_{i=1,2,\ldots,m,j=1,2,\ldots,n}$ with $\tilde K_{ij}=e^{\imath \theta_{ij}} \sqrt{\gamma_{ij}}$ with $\theta_{ij},\gamma_{ij} \in \mathbb{R}$ and $\gamma_{ij} >0$,  then by some straightforward algebra (see \cite[proof of Theorem 3.4]{JNP06}) we find that $B$ has the block form 
$B=[B_{ij}]_{i=1,2,\ldots,n,j=1,2,\ldots,m}$ with 
\begin{equation}
B_{ij}=-\sqrt{\gamma_{ji}}\left[\begin{array}{cc}  \cos(\theta_{ji}) & \sin(\theta_{ji}) \\ -\sin(\theta_{ji}) & \cos(\theta_{ji}) \end{array}\right]. \label{eq:B_ij}
\end{equation}
That is, $B_{ij}$ is a scaled rotation matrix on $\mathbb{R}^2$. These special structures of the matrices of completely passive linear quantum stochastic systems lead to the following results.

\begin{lemma}
\label{lem:cp-trans} If $G=(A,B,C,D)$ is a completely passive linear quantum stochastic system and $T$ is a unitary symplectic matrix, then the transformed system $\tilde G= (TAT^{-1},TB,CT^{-1},D)$ is also completely passive.
\end{lemma}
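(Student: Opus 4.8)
The plan is to characterize complete passivity via the structure of the $(S,L,H)$ parameters --- equivalently, via the special block forms of $R$, $K$ (and hence $B$) recalled just before the lemma --- and to show these forms are invariant under a unitary symplectic similarity transformation $T$. Recall from Section \ref{sec:cp-systems} that $G$ with equal inputs and outputs is completely passive iff $H=\frac{1}{2}a^\dag \tilde R a + c$ and $L=\tilde K a$ where $a = \frac12(q_1+\imath p_1,\ldots,q_n+\imath p_n)^\top$; the case of fewer outputs than inputs reduces to this by appending rows $E$ to $D$ so that $\tilde D=[D^\top\ E^\top]^\top$ is unitary symplectic, so without loss of generality one may treat the equal-dimension case and then restrict outputs at the end. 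The key algebraic fact I would use is that the change of variables from $x$ to $a$ is implemented by a fixed unitary matrix; concretely, with $U=\frac{1}{\sqrt2}\,\mathrm{diag}_n\!\left(\left[\begin{array}{cc}1 & -\imath\\ 1 & \imath\end{array}\right]\right)$ (the same $U$ introduced in Theorem \ref{thm:sym-diag}, up to a permutation), one has $a = (\text{const})\cdot U x$ modulo reordering. A real $2n\times 2n$ matrix $T$ is unitary symplectic precisely when, in the annihilation/creation coordinates, it becomes block-diagonal $\mathrm{diag}(\mathcal{U},\mathcal{U}^{\#})$ for some $n\times n$ complex unitary $\mathcal{U}$ --- this is the standard Bogoliubov/"passive" characterization of the intersection $O(2n)\cap Sp(2n,\mathbb{R})$.

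First I would set up this correspondence carefully: write $T$ in the $(a,a^\#)$ picture and invoke that unitary-symplectic $\iff$ no mixing between $a$ and $a^\#$, i.e. $T$ acts as $a\mapsto \mathcal U a$ with $\mathcal U$ unitary. Second, I would compute how the transformation acts on the parameters. Under $z = Tx$, the Hamiltonian $H=\frac12 x^\top R x$ becomes $\frac12 z^\top (T^{-\top} R T^{-1}) z$; passing to annihilation coordinates this is $\frac12 a^\dag (\mathcal U^{-\dag}\tilde R \,\mathcal U^{-1}) a$ plus a constant, and $\mathcal U^{-\dag}\tilde R\,\mathcal U^{-1}$ is again Hermitian, so condition (i) is preserved. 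Similarly $L=\tilde K a \mapsto \tilde K \mathcal U^{-1} a$, so the new coupling vector is $\tilde K' a$ with $\tilde K' = \tilde K\mathcal U^{-1}\in\mathbb{C}^{m\times n}$, giving condition (ii). The scattering matrix $S$ (hence $D$) is unchanged by the internal coordinate change --- $\tilde D = D$ --- so condition (iii) (unitary symplecticity of $D$, or of $\tilde D=[D^\top\ E^\top]^\top$ in the reduced-output case) still holds. Hence $(TAT^{-1},TB,CT^{-1},D)$ has parameters of completely passive form.

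Alternatively, and perhaps more cleanly, I would argue purely at the level of the $B$ matrix and the $R$ matrix using the block forms recalled in the excerpt: $R_{jk}=r_{jk}I_2$, and $B_{ij}$ a scaled $2\times 2$ rotation as in (\ref{eq:B_ij}). A $2n\times 2n$ matrix commutes appropriately with the complex structure $\mathbb{J}_n$ in the block sense (i.e. its $2\times2$ blocks all commute with $\mathbb{J}$) exactly when, block by block, they are of the form $\alpha I_2 + \beta\mathbb{J}$, i.e. represent complex scalars; a real symplectic orthogonal $T$ has precisely this property (its blocks represent the entries of $\mathcal U$). Since the set of $2n\times2n$ real matrices whose $2\times2$ blocks all represent complex numbers is a matrix algebra closed under products and inverses, and $R$, $B B^\top$ style expressions built from $R$ and $B$ lie in it, conjugation by $T$ keeps $R$ block-scalar-times-$I_2$ and keeps each block of $B$ a scaled rotation, which is exactly the characterization of completely passive $(A,B,C,D)$.

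The main obstacle I anticipate is not any single computation but getting the bookkeeping of the two coordinate systems exactly right: the paper orders $x$ as $(q_1,p_1,\ldots,q_n,p_n)$, so $\mathbb{J}_n$ is block-diagonal, and one must either carry along the permutation $P_s$ (as in Lemma \ref{lem:sym-eig-struct}) that regroups into $(q_1,\ldots,q_n,p_1,\ldots,p_n)$ or work block-wise throughout; mixing the two conventions is where errors creep in. A secondary subtlety is the reduced-output case: one must check that if $\tilde D=[D^\top\ E^\top]^\top$ is unitary symplectic for $G$, then the \emph{same} $\tilde D$ witnesses complete passivity of $\tilde G$, which is immediate since $D$ (and a fortiori $\tilde D$) is untouched by the internal symplectic transformation $T$. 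Modulo this, the proof is short: it reduces to the observation that unitary symplectic similarity is, in annihilation/creation coordinates, just conjugation by a complex unitary, and every defining feature of complete passivity --- Hermiticity of $\tilde R$, the $\mathbb{C}^{m\times n}$ coupling $\tilde K$, unitary symplecticity of $D$ --- is preserved under such conjugation.
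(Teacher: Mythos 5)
Your proposal is correct and follows essentially the same route as the paper's proof: pass to annihilation/creation coordinates, observe that a real unitary symplectic $T$ acts there as ${\rm diag}(\mathcal{U},\mathcal{U}^{\#})$ with $\mathcal{U}$ complex unitary, and conclude that $\tilde R \mapsto \mathcal{U}\tilde R\,\mathcal{U}^{\dag}$ and $\tilde K \mapsto \tilde K \mathcal{U}^{\dag}$ preserve the completely passive form while $D$ is untouched. The only substantive difference is that the paper derives the no-mixing (block-diagonal Bogoliubov) property of $T$ explicitly from the unitarity and ${\rm diag}(I,-I)$-preservation relations for the transformed matrix $W$, whereas you cite it as the standard characterization of $O(2n)\cap Sp(2n,\mathbb{R})$.
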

\begin{proof}
See Appendix \ref{app:cp-trans}.
\end{proof}

The above lemma essentially states that the complete passivity property is invariant under unitary symplectic similarity transformations. Also, we have that 
\begin{theorem}
\label{thm:cp-qbr} For any completely passive system that is asymptotically stable (i.e., the $A$ matrix is Hurwitz), $P=I$ and $[\mathbb{J}_n P,Q\mathbb{J}_n]=0$. That is, any such system has a quasi-balanced realization. In this case, the quasi-balancing transformation $T$ is unitary symplectic and can be determined by applying Theorem \ref{thm:sym-diag} to the observability Gramian $Q$ such that $T^{-\top}QT^{-1}=\Sigma_Q$. Moreover, any reduced system obtained by truncating a subsystem of the quasi-balanced realization is again completely passive.
\end{theorem}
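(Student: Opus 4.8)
The plan is to verify the three assertions in turn, exploiting the special block structure (\ref{eq:B_ij}) of the matrices of a completely passive system, together with the form $R=[r_{jk}I_2]$, the physical realizability constraints (\ref{eq:pr-1e})--(\ref{eq:pr-2e}), and the Lyapunov equations for $P$ and $Q$. The organizing observation is that for a completely passive $G$ every matrix in the quadruple is compatible with the complex structure $\mathbb{J}$ in the sense that, in $2\times2$ blocks, each block commutes with $\mathbb{J}$: since each $B_{ij}$ in (\ref{eq:B_ij}) is a scaled rotation and $R=[r_{jk}I_2]$, one gets $\mathbb{J}_n B=B\mathbb{J}_m$ and $\mathbb{J}_n R=R\mathbb{J}_n$; writing $A$ in its form $A=2\mathbb{J}_n(R+\Im\{K^{\dag}K\})$ with $K=\tilde K\Lambda$, where $\Lambda$ is the map sending $x$ to the annihilation vector $a=\frac12(q_1+\imath p_1,\ldots,q_n+\imath p_n)^{\top}$ and intertwines $\mathbb{J}_n$ with scalar multiplication, one checks that $K^{\dag}K$, hence $R+\Im\{K^{\dag}K\}$, commutes with $\mathbb{J}_n$ and therefore $A\mathbb{J}_n=\mathbb{J}_n A$; and from (\ref{eq:pr-2e}) together with the structure of the unitary symplectic $D$ one gets $C=-DB^{\top}$ and hence $\mathbb{J}_n(C^{\top}C)=(C^{\top}C)\mathbb{J}_n$.

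First I would establish $P=I$: substituting $\mathbb{J}_n B=B\mathbb{J}_m$ into (\ref{eq:pr-1e}) gives $A\mathbb{J}_n+\mathbb{J}_n A^{\top}+\mathbb{J}_n BB^{\top}=0$, and left-multiplying by $-\mathbb{J}_n$ while using $A\mathbb{J}_n=\mathbb{J}_n A$ reduces this to $A+A^{\top}+BB^{\top}=0$; hence $P=I$ solves $AP+PA^{\top}+BB^{\top}=0$, and since $A$ is Hurwitz this solution is unique, so the controllability Gramian is $I$. For $Q$, the same commutation relations show that if $Q$ solves $A^{\top}Q+QA+C^{\top}C=0$ then so does the symmetric positive semidefinite matrix $-\mathbb{J}_n Q\mathbb{J}_n$; uniqueness forces $Q=-\mathbb{J}_n Q\mathbb{J}_n$, i.e. $\mathbb{J}_n Q=Q\mathbb{J}_n$. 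With $P=I$ this yields $[\mathbb{J}_n P,Q\mathbb{J}_n]=[\mathbb{J}_n,Q\mathbb{J}_n]=\mathbb{J}_n Q\mathbb{J}_n-Q\mathbb{J}_n^2=-Q+Q=0$, which is exactly the hypothesis of Point~2 of Theorem~\ref{thm:bt-q-lin}; its diagonalizability hypotheses hold automatically, since $\mathbb{J}_n P=\mathbb{J}_n$ is diagonalizable and, $Q\ge0$ commuting with $\mathbb{J}_n$, $\mathbb{J}_n Q$ is $\imath$ times a Hermitian positive semidefinite operator in the complex structure, hence normal. Thus $G$ has a quasi-balanced realization.

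Next, for the transformation, apply Theorem~\ref{thm:sym-diag} to $Q$ to obtain a symplectic $T_0$ with $T_0 Q T_0^{\top}=\Sigma_Q$ diagonal and put $T=T_0^{-\top}$, so that $T^{-\top}QT^{-1}=\Sigma_Q$. Because $Q$ commutes with $\mathbb{J}_n$, Williamson's symplectic diagonalization of $Q$ can be realized by an \emph{orthogonal} symplectic matrix (the standard specialization of Williamson's theorem for matrices commuting with the complex structure), so $T_0$, and hence $T=T_0$, is unitary symplectic; consequently $TPT^{\top}=TT^{\top}=I=\Sigma_P$, consistent with $P=I$. Finally, since $T$ is unitary symplectic, Lemma~\ref{lem:cp-trans} gives that $\tilde G=(TAT^{-1},TB,CT^{-1},D)$ is completely passive, and then Lemma~\ref{lem:cp-preservation} (equivalently Theorem~\ref{thm:pr-sub-systems} with the cascade picture of Fig.~\ref{fig:sysdecom}), applied after a permutation moving the subsystems to be discarded to the tail, shows that any system obtained by truncating a subsystem of $\tilde G$ is again completely passive.

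The hard part will be the bookkeeping in the first step: verifying rigorously that $A$ and $C$, not merely $B$, $R$ and $D$, inherit the $\mathbb{J}$-compatible block structure, i.e. that $A\mathbb{J}_n=\mathbb{J}_n A$ and $[\mathbb{J}_n,C^{\top}C]=0$, which requires unwinding the exact correspondence between the quadrature matrices $(A,B,C,D)$ and the annihilation-operator parameters $(S,\tilde K,\tilde R)$ of a completely passive system, most conveniently via \cite{Nurd10b}. Once that structure is in hand, the remaining steps follow directly from Theorems~\ref{thm:bt-q-lin} and~\ref{thm:sym-diag} and Lemmas~\ref{lem:cp-trans} and~\ref{lem:cp-preservation}.
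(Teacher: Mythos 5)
Your proposal is correct and follows essentially the same route as the paper: $P=I$ from the rotation-block structure of $B$ (equivalently $\mathbb{J}_nB=B\mathbb{J}_m$) combined with (\ref{eq:pr-1e}), $Q\mathbb{J}_n=\mathbb{J}_nQ$ from uniqueness of the Lyapunov solution after checking that $A$ and $C^{\top}C$ commute with $\mathbb{J}_n$, and then Theorem \ref{thm:sym-diag} plus Lemmas \ref{lem:cp-trans} and \ref{lem:cp-preservation} for the unitary symplectic quasi-balancing transformation and preservation of complete passivity under truncation. The only organizational difference is that the paper treats the non-square $D$ by first reducing to $D_0=[\,I\ \ 0\,]$ via $\tilde B=B\tilde D^{\top}$ (its Part C), whereas you work with general $D$ directly; your step $[\mathbb{J}_n,C^{\top}C]=0$ then needs the (true, but worth writing out) fact that $D^{\top}D=\tilde D^{\top}\mathrm{diag}(I_{n_y},0)\tilde D$ commutes with $\mathbb{J}_m$ because the completion $\tilde D$ is unitary symplectic — exactly the bookkeeping you flag at the end.
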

\begin{proof}
See Appendix \ref{app:cp-qbr}.
\end{proof}

We are now ready to proceed to an example illustrating the use of Theorems \ref{thm:sym-diag}, \ref{thm:bt-q-lin}, and \ref{thm:cp-qbr}.

\begin{example}
Consider a two mirror optical cavity $G_1$ (the mirrors being labelled M1 and M2) with resonance frequency $\omega_c$ (say, in the order of GHz, its exact value not being critical here) and each mirror having decay rate $\gamma=12 \times 10^6$ Hz (typically a much smaller value than $\omega_c$ for a high Q cavity). The mirror  M2  is driven by coherent field $d\mathcal{A}_{\rm in}(t) = \alpha(t) e^{\imath \omega_c t} dt+d\mathcal{A}_{2}(t)$, where $\alpha(t)$ is a complex-valued signal and $\mathcal{A}_2(t)$ a vacuum annihilation field. For sufficiently large $t$, the light $\mathcal{A}_{\rm out}(t)$ reflected from M1 will be a filtered version (by the cavity) of $\mathcal{A}_{\rm in}(t)$ of the form $d\mathcal{A}_{\rm out}(t) = \tilde \alpha(t) e^{\imath\omega_c t} dt + d\mathcal{A}_1(t)$, where $\tilde \alpha(t)$ is a low-pass filtered version of $\alpha(t)$ (with some inherent vacuum fluctuations) and $\mathcal{A}_1(t)$  a vacuum annihilation field. 
Note that the light reflected from M2 (the other cavity output) is of no interest here since it contains a feedthrough of the unfiltered signal due to the  cavity being driven through this mirror, so we opt to ignore it. Working in a rotating frame with respect to the cavity resonance frequency $\omega_c$ (see, e.g., \cite{NJD08}), this two mirror cavity is described by a  one degree of freedom, 4 input, and 2 output linear quantum stochastic system with Hamiltonian matrix $R=0_{2 \times 2}$, coupling matrix $K=\frac{1}{2}\left[\begin{array}{cc} \sqrt{\gamma} & \imath\sqrt{\gamma} \\ \sqrt{\gamma} & \imath \sqrt{\gamma} \end{array} \right]$, and scattering matrix $S=I$, with the output from mirror M2 neglected.

It is possible to obtain a high roll-off rate and realize a sharper low-pass cut-off by connecting several  identical cavities together in a particular way, as shall now be described. Suppose that $G_2,G_3,\ldots,G_N$ are additional cavities all identical to $G_1$. For $j=1,2,\ldots,N-1$, connect the output from mirror M1 of cavity $G_{j-1}$ as input to mirror M2 of $G_{j}$. The signal to be filtered will drive mirror M2 of cavity $G_1$ and the output of interest will be the filtered light reflected off mirror M1 of cavity $G_N$. The optical low-pass filtering network $G_{{\rm net},N}$ composed of this interconnection of $G_1,G_2,\ldots,G_N$ is a linear quantum stochastic system with  $N$ degrees of freedom, $2(N+1)$ inputs (with a pair of quadratures being driven by the signal to be filtered), and 2 outputs\footnote{Physically there are actually $2(N+1)$ outputs but $2N$ of them are of no interest as they feed through the original unfiltered signal and are thus ignored.}. This network is  completely passive  since it is composed of completely passive cavities, and the $A$ matrix of the network is Hurwitz. For the case $N=5$, the Hankel singular values of the network\footnote{By Lemma \ref{lm:qb-to-b} and Theorem \ref{thm:cp-qbr},  they  coincide with the square root of the symplectic eigenvalues of $Q$ and come in identical pairs.} are  0.9028, 0.5826, 0.2632, 0.0812, and 0.0154 (each appearing twice). This suggests that  modes corresponding to the two smallest Hankel singular values may be removed without excessive truncation error. Transforming this system into quasi-balanced form by applying  Theorems \ref{thm:bt-q-lin} and \ref{thm:sym-diag}, and truncating the two modes corresponding to the two smallest Hankel singular values 0.0812 and 0.0154 gives a physically realizable asymptotically stable  reduced model $G_{\rm red,3}$ with three degrees of freedom, 12 inputs, and 2 outputs, with error bound $\|\Xi_{G_{\rm net,5}}-\Xi_{G_{\rm red,3}}\|_{\infty} \leq 2(0.0812+0.0154)=0.1932$. Here the driven input quadratures are labelled as the last two inputs $w_{2N+1}$ and $w_{2N+2}$, and the frequency responses of interest will be the ones from $w_{2N+1}$ and $w_{2N+2}$ to the  filtered output quadratures $y_1$ and $y_2$, respectively, with all other inputs only contributing vacuum fluctuations to the filtered signal\footnote{Note the (steady-state) vacuum fluctuations experienced by the $2N$ cavity quadratures will only be of unity variance, independently of $N$. This is because the steady-state (symmetrized) covariance of the fluctuations is given by the controllability Gramian $P$ and by  Theorem \ref{thm:cp-qbr} we have that for completely passive systems  $P=I$.}. Due to decoupling and symmetries in the cavity equations, the single input single output transfer functions  $w_{2N+1} \rightarrow y_1$ and $w_{2N+2} \rightarrow y_2$ are in fact identical, and their magnitude and phase frequency responses are shown in Fig.~\ref{fig:lp-modred}. It can be seen from the figure that the reduced model approximates the magnitude response quite well at lower frequencies but has a slower roll-off rate than the full network, as can be expected, and  also captures the phase response of the full model very well. 

\begin{figure}[tbph]
\label{fig:lp-modred}
\centering
\includegraphics[scale=0.6]{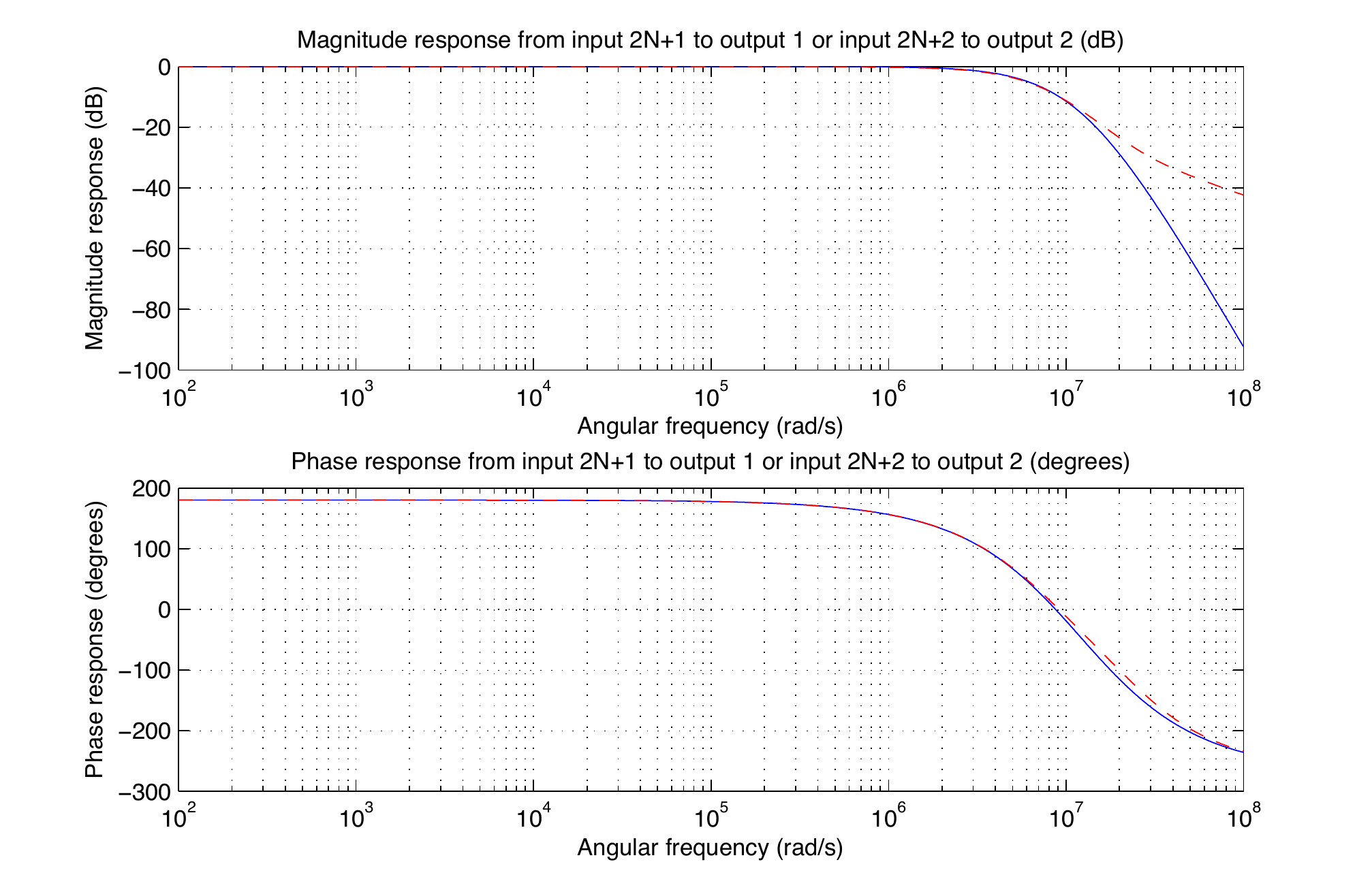}
\caption{Magnitude (top) and phase (bottom) frequency responses from the driven quadrature $w_{2N+1}$ to the filtered output $y_1$ (or, identically, from  $w_{2N+2}$ to the  filtered output $y_2$) of the full optical network and a three degree of freedom reduced model for $N=5$. The response of the full network and reduced model are indicated by solid blue lines and dashed red lines, respectively.}
\end{figure}

\end{example}

\section{Conclusion}
\label{sec:conclusion} This paper has developed several new results on model reduction of linear quantum stochastic systems. It is shown that the physical realizability and complete passivity properties of linear quantum stochastic systems are preserved under subsystem truncation. The paper also studied the co-diagonalizability of the controllability and observability Gramians of a linear quantum stochastic system. It is found that a balanced realization of the system, where the Gramians are diagonal and equal, exists if and only if a strong condition is satisfied, typically not satisfied by generic linear quantum stochastic systems. Necessary and sufficient conditions for weaker realizations with simultaneously diagonal controllability and observability Gramians were also obtained. The notion of a quasi-balanced realization of a linear quantum stochastic system was introduced and it is shown that the special class of asymptotically stable completely passive linear quantum stochastic systems always possess a quasi-balanced realization. An explicit bound for the truncation error of model reduction on a quasi-balanceable linear quantum stochastic system was also derived, in analogy with the classical setting. An example of an optical cavity network for optical low-pass filtering was developed  to illustrate the application of the results of this paper to model reduction of quasi-balanceable linear quantum stochastic systems.  

\section*{Appendices}
\appendices
\section{Proof of Lemma \ref{lem:sym-cong-sim}} 
\label{app:sym-cong-sim} Suppose that there is a symplectic matrix $T$ such that $TPT^{\top}=\Sigma$. Then we have (since $T^{\top}$ and $T^{-1}$ are also symplectic) that 
\begin{eqnarray*}
T^{-\top}\mathbb{J}_n P T^{\top} = (T^{-\top} \mathbb{J}_n T^{-1}) TP T^{\top} = \mathbb{J}_n TPT^{\top} = \mathbb{J}_n \Sigma,
\end{eqnarray*}
therefore $\mathbb{J}_n P$ is symplectically similar to $\mathbb{J}_n \Sigma$. 

Conversely, suppose that there is a symplectic matrix $T$ such that $T^{-\top} \mathbb{J}_nPT^{\top}=\mathbb{J}_n \Sigma$ for a real diagonal matrix $\Sigma$. Then we have
\begin{eqnarray*}
T P T^{\top} = -T\mathbb{J}_n \mathbb{J}_n P T^{\top} &=& -(T\mathbb{J}_n T^{\top}) T^{-\top} \mathbb{J}_nP T^{\top},\\
&=& -\mathbb{J}_n T^{-\top} \mathbb{J}_n P T^{\top},\\
&=& -\mathbb{J}_n\mathbb{J}_n \Sigma,\\
&=& \Sigma. 
\end{eqnarray*}
Therefore, $P$ is symplectically congruent to $\Sigma$. 

Suppose that $P \geq 0$ is symplectically congruent to $\Sigma$, and $\mathbb{J}_n P$ is diagonalizable. Then, by the above, $\mathbb{J}_n \Sigma$ is also diagonalizable. Furthermore, since $\Sigma \geq 0$, the matrix $\mathbb{J}_n \Sigma$ has eigenvalues of the form $\pm  \imath \sigma_1$, $\pm \imath \sigma_2$, $\ldots$, $\pm \imath \sigma_n$ (for some $\sigma_i \geq 0$ for $i=1,2,\ldots,n$). It follows that $\mathbb{J}_n P$ is also diagonalizable with the same set of eigenvalues. In the special case that $P>0$, then $\Sigma>0$, $\mathbb{J}_n P$ is diagonalizable by Williamson's Theorem \cite{Will36}, \cite[Lemma 2]{PSL09}, and $\sigma_i >0$  for $i=1,2,\ldots,n$. $\hfill \Box$

\section{Proof of Theorem \ref{thm:sym-diag}}
\label{app:sym-diag} Define $\tilde P$ as in Lemma \ref{lem:sym-eig-struct}. From the proof of Lemma \ref{lem:sym-eig-struct} we know that $\mathbb{K}_n \tilde P$ has eigenvalues $\pm \imath \sigma_1$, $\pm \imath \sigma_2$,$\ldots$,$\pm \imath \sigma_n$. Now, let
$$
W=P_s^{\top} V P_s = P_s^{\top} [\begin{array}{ccccccc} v_1 & v_1^{\#} & v_2 & v_2^{\#} &\ldots & v_n & v_n^{\#} \end{array}].
$$ 
Since $P=P^{\top} \geq 0$ and $\mathbb{J}_n P$ is assumed to be diagonalizable, we have from Lemma \ref{lem:sym-eig-struct} 
\begin{eqnarray*}
-\imath W^{-1} \mathbb{J}_n P W &=& -\imath P_s^{\top} V^{-1} P_s \mathbb{J}_n P P_s^{\top} V P_s,\\
&=& -\imath P_s^{\top} V ^{-1}(P_s \mathbb{J}_n P_s^{\top}) (P_s P P_s^{\top}) V P_s,\\
&=& P_s^{\top} (-\imath V^{-1} \mathbb{K}_n \tilde P V) P_s,\\
&=& P_s^{\top} {\rm diag}(\sigma_1,\sigma_2,\ldots,\sigma_n,-\sigma_1,-\sigma_2,\ldots,-\sigma_n) P_s,\\
&=& {\rm diag}(\sigma_1,-\sigma_1,\sigma_2,-\sigma_2,\ldots,\sigma_n, -\sigma_n),
\end{eqnarray*} 
Equivalently,
$
W^{-1}\mathbb{J}_n P W = {\rm diag}(\imath\sigma_1,-\imath\sigma_1,\imath \sigma_2,-\imath\sigma_2,\ldots,\imath \sigma_n, -\imath \sigma_n).
$
Moreover, we also have
\begin{eqnarray*}
-\imath W^{\dag} \mathbb{J}_n W &=& -\imath P_s^{\top} V^{\dag} (P_s \mathbb{J}_n P_s^{\top}) V P_s,\\
&=& P_s^{\top} (-\imath V^{\dag} \mathbb{K}_n V)P_s,\\
&=& P_s^{\top} {\rm diag}(I_n,-I_n) P_s,\\
&=& {\rm diag}_n \left({\rm diag}(1,-1) \right).
\end{eqnarray*}
Note that the unitary matrix $U$ in the statement of the theorem satisfies
$$
U^{\dag} {\rm diag}_n\left({\rm diag}(1,-1) \right) U = -\imath \mathbb{J}_n, 
$$
and also the matrix $ T_0=WU$ is real since
\begin{eqnarray*}
WU &=& \frac{1}{\sqrt{2}}P_s^{\top}[\begin{array}{ccc} v_1+ v_1^{\#} & -\imath v_1+ \imath v_1^{\#} & v_2+ v_2^{\#} \end{array} \\ 
&&\quad  \begin{array}{cccc} -\imath v_2+ \imath v_2^{\#} &\ldots & v_n+ v_n^{\#} & -\imath v_n+ \imath v_n^{\#} \end{array}].
\end{eqnarray*}
Thus we have that
\begin{eqnarray*}
 T_0^{\top} \mathbb{J}_n T_0 = T_0^{\dag} \mathbb{J}_n T_0 = U^{\dag} W^{\dag} \mathbb{J}_n W U = \imath U^{\dag} (-\imath  W^{\dag} \mathbb{J}_n W)U=\imath U^{\dag}{\rm diag}_n \left({\rm diag}(1,-1)\right) U =\mathbb{J}_n,
\end{eqnarray*}
and 
\begin{eqnarray*}
T_0^{-1} \mathbb{J}_n P T_0 &=& U^{\dag} W^{-1} \mathbb{J}_n P W U,\\
&=&\imath U^{\dag} (-\imath  W^{-1} \mathbb{J}_n PW)U,\\
&=& \imath U^{\dag} {\rm diag}(\sigma_1,- \sigma_1, \sigma_2,- \sigma_2,\ldots,\sigma_n, - \sigma_n) U,\\
&=& {\rm diag}_n(\sigma_1 \mathbb{J},\sigma_2\mathbb{J},\ldots,\sigma_n \mathbb{J}),\\
&=& \mathbb{J}_n \Sigma,\\
&=& \Sigma \mathbb{J}_n,
\end{eqnarray*}
with $\Sigma ={\rm diag}(\sigma_1 I_2,\sigma_2 I_2,\ldots,\sigma_n I_2)$. Thus we have constructed a symplectic matrix $T_0$ such that $T_0^{-1} \mathbb{J}_n P T_0=\mathbb{J}_n \Sigma = \Sigma \mathbb{J}_n$ (the second identity follows from the specific form of $\Sigma$). Defining $T=T_0^{\top}$ we have that $ T^{-\top} \mathbb{J}_n P T^{\top} =\mathbb{J}_n \Sigma = \Sigma \mathbb{J}_n$ and  from the proof of Lemma \ref{lem:sym-cong-sim} we  also conclude that $ T P T^{\top} = \Sigma$, as claimed. $\hfill \Box$

\section{Proof of Theorem \ref{thm:bt-q-lin}}
\label{app:bt-q-lin} We first prove the only if part of Point 1.  Suppose that there is a symplectic matrix $T$ such that 
$TPT^{\top}=\Sigma$, $T^{-\top}QT^{-1}=\Sigma$, and $\Sigma= {\rm diag}(\sigma_1 I_2, \sigma_2 I_2,\ldots,\sigma_n I_n)$ for some $\sigma_1,\sigma_2,\ldots,\sigma_n \geq 0$.  Then we have from Lemma \ref{lem:sym-cong-sim} that $T^{-\top}\mathbb{J}_n P T^{\top} = \mathbb{J}_n \Sigma$ and $T\mathbb{J}_n Q T^{-1}=\mathbb{J}_n\Sigma$. Now, note from Theorem \ref{thm:sym-diag} that $\mathbb{J}_n \Sigma= \Sigma \mathbb{J}_n$ (due to the specific form of $\Sigma$) from which it follows that $T^{-\top}Q \mathbb{J}_n T^{\top} = \mathbb{J}_n \Sigma$. Thus, we have that $T^{-\top}\mathbb{J}_n P T^{\top}=\mathbb{J}_n \Sigma = T^{-\top}Q \mathbb{J}_n T^{\top}$. It follows that $\mathbb{J}_n P=Q \mathbb{J}_n$.

For the if part of Point 1, suppose that $\mathbb{J}_n P =Q \mathbb{J}_n$. Let $\sigma_1$, $\sigma_2$, $\ldots$, $\sigma_n$ be the symplectic eigenvalues of $P$ and define $\Sigma=(\sigma_1 I_1,\sigma_2 I_2,\ldots,\sigma_n I_2)$. Then by Theorem \ref{thm:sym-diag} there exists a symplectic matrix $T$ such that $T^{-\top} \mathbb{J}_nP T^{\top}  =\mathbb{J}_n \Sigma$ and $T P T^{\top}=\Sigma$. Since $\mathbb{J}_n P =Q \mathbb{J}_n$, we also have that $T^{-\top} Q\mathbb{J}_n T^{\top}=\mathbb{J}_n \Sigma$ or, equivalently,  $T \mathbb{J}_n  Q T^{-1}=\mathbb{J}_n \Sigma$ (again using $\mathbb{J}_n \Sigma = \Sigma \mathbb{J}_n$). From this last equality it follows from Lemma \ref{lem:sym-cong-sim} that also $T^{-\top}Q T^{-1}=\Sigma$. 

Finally, we prove the last part of Point 1. It is apparent from the above that $P$ and $Q$ must have the same symplectic eigenvalues. Also note that $TPQT^{-1}= (TPT^{\top})(T^{-\top}QT^{-1})=\Sigma^2$. Since the eigenvalues of $PQ$ are squares of the Hankel singular values of $G$ and they are defined independently of the particular similarity transformation $T$ \cite{ZDG95}, $\sigma_1$, $\sigma_2$, $\ldots$, $\sigma_n$ are therefore also Hankel singular values of $G$. 

The proof of the only if part of Point 2  is similar to the proof of the only if part of Point 1, so we will leave the details for the reader. For the if part of Point 2,  note that since $\sigma_{X,1}$, $\sigma_{X,2}$, $\ldots$, $\sigma_{X,n}$ are the symplectic eigenvalues of $X$ for $X \in \{P,Q\}$,  $[\mathbb{J}_nP,Q\mathbb{J}_n]=0$ is, by Lemma \ref{lem:sym-cong-sim}, equivalent to $\mathbb{J}_n P$ and $Q\mathbb{J}_n$ being simultaneously diagonalizable by some complex matrix $W$ as
\begin{eqnarray*}
W^{-1} \mathbb{J}_n  P W &=& \imath {\rm diag}(\sigma_{P,1},-\sigma_{P,1},\ldots,\sigma_{P,n},-\sigma_{P,n}),\\
 W^{-1}  Q \mathbb{J}_n W &=& \imath {\rm diag}(\sigma_{Q,1},-\sigma_{Q,1},\ldots,\sigma_{Q,n},-\sigma_{Q,n}),
\end{eqnarray*}
In particular, the columns of $W$ are simultaneously eigenvectors of $\mathbb{J}_n P$ and $Q \mathbb{J}_n$. Following the corresponding steps in the proof of Theorem \ref{thm:sym-diag}, we can therefore establish that there is a symplectic matrix $T$ such that (again expoiting the specific form of $\Sigma_Q$ to commute it with $\mathbb{J}_n$)
\begin{eqnarray*}
T^{-\top} \mathbb{J}_n  P T ^{\top} &=& \mathbb{J}_n \Sigma_P ,\\
 T^{-\top} Q  \mathbb{J}_n T^{\top} &=& \mathbb{J}_n \Sigma_Q = \Sigma_Q\mathbb{J}_n \Leftrightarrow T \mathbb{J}_n Q T^{-1}=\mathbb{J}_n \Sigma_Q. 
\end{eqnarray*}
Therefore, from Lemma \ref{lem:sym-cong-sim} we conclude that $TPT^{\top}=\Sigma_P$ and $T^{-\top} Q T^{-1}=\Sigma_Q$.

Finally, we move on to proving Point 3. We first deal with the only if part. Suppose that there is a symplectic matrix $T$ such that 
$$TPT^{\top}=\Sigma_P={\rm diag}(\omega_{P,1},\omega_{P,2},\ldots,\omega_{P,2n-1},\omega_{P,2n}),$$ 
for some nonnegative numbers $\omega_{P,1},\omega_{P,2},\ldots,\omega_{P,2n-1},\omega_{P,2n}$, and 
$$T^{-\top} QT^{-1}=\Sigma_Q={\rm diag}(\omega_{Q,1},\omega_{Q,2},\ldots,\omega_{Q,2n-1},\omega_{Q,2n}),$$ 
for some nonnegative numbers $\omega_{Q,1},\omega_{Q,2},\ldots,\omega_{Q,2n-1},\omega_{Q,2n}$.  Since $\mathbb{J}_n X$ is assumed to be diagonalizable for $X \in \{P,Q\}$, by Lemma \ref{lem:sym-cong-sim} so is the matrix $\mathbb{J}_n \Sigma_X$. Moreover, since $\Sigma_X$ is real positive semidefinite, it follows (recall the proof of Lemma  \ref{lem:sym-cong-sim}) that  $\omega_{X,2i}=0$ if and only if $\omega_{X,2i-1}=0$ for $X\in \{P,Q\}$ and $i=1,2,\ldots,n$; for if this were not true then  $\mathbb{J}_n \Sigma_X$ will have zero as an eigenvalue with geometric multiplicity less than its algebraic multiplicity, contradicting the assumption that $\mathbb{J}_n X$ is diagonalizable. Now, for $X \in \{P,Q\}$, define 
$$
d_{X,2j-1}=\left\{ \begin{array}{cc} (\omega_{X,2j}/\omega_{X,2j-1})^{1/4} & \hbox{if $x_{2j-1} \neq0$ and $x_{2j}\neq 0$}\\
1 & \hbox{if $x_{2j-1} = 0$ and $x_{2j} = 0$}
\end{array} \right.,
$$ 
and 
$d_{X,2j}=\frac{1}{d_{X,2j-1}}$ for $j=1,2,\ldots,n$. Also, define $$D_{X}={\rm diag}(d_{X,1},d_{X,2},\ldots,d_{X,2n-1},d_{X,2n}),\; X \in \{P,Q\}.$$ Then notice that, by construction, $D_P$ and $D_Q$ are diagonal symplectic matrices. Moreover, $$D_P T P T^{\top} D_P  = {\rm diag}(e_{P,1}I_2,e_{P,2}I_2,\ldots,e_{P,n}I_2),$$ 
with $e_{P,i}=\sqrt{\omega_{P,2i-1}\omega_{P,2i}}$ for $i=1,2,\ldots,n$, and 
$$
D_Q T^{-\top} Q T^{-1} D_Q= {\rm diag}(e_{Q,1}I_2,e_{Q,2}I_2,\ldots,e_{Q,n}I_2),
$$
with $e_{Q,i}=\sqrt{\omega_{Q,2i-1}\omega_{Q,2i}}$ for $i=1,2,\ldots,n$. Define $\tilde T_P= D_P T$ and $\tilde T_Q= D_Q^{-1} T$ and note that by definition $D_P^{-1} \tilde T_P = D_Q \tilde T_Q$.  Again, it follows from Lemma \ref{lem:sym-cong-sim} that 
\begin{eqnarray*}
\tilde T_P^{-\top} \mathbb{J}_n P \tilde T_P^{\top} &=&\mathbb{J}_n {\rm diag}(e_{P,1}I_2,e_{P,2}I_2,\ldots,e_{P,n}I_n),\\
\tilde T_Q \mathbb{J}_n Q \tilde T_Q^{-1}&=&\mathbb{J}_n {\rm diag}(e_{Q,1}I_2,e_{Q,2}I_2,\ldots,e_{Q,n}I_2). 
\end{eqnarray*}
That is, $e_{X,1},e_{X,2},\ldots,e_{X,n}$ are the symplectic eigenvalues of $X$ for $X \in \{P,Q\}$. This completes the proof of the only if part.

Conversely, to prove the if part of Point 3, let $\sigma_{X,1},\sigma_{X,2},\ldots,\sigma_{X,n}$ be symplectic eigenvalues of $X \in \{P,Q\}$, and let $\tilde \Sigma_X={\rm diag}(\sigma_{X,1}I_2,\sigma_{X,2}I_2,\ldots,\sigma_{X,n}I_2)$. Suppose that there exist symplectic matrices $\tilde T_P$ and $\tilde T_Q$, and diagonal symplectic matrices $D_P$ and $D_Q$, such that $\tilde T_P P \tilde T_P^{\top}=\tilde \Sigma_P$ and $\tilde T_Q^{-\top} Q \tilde T_Q^{-1}=\tilde \Sigma_Q$, and $D_P^{-1} \tilde T_P = D_Q \tilde T_Q$. Let $\Sigma_P= D_P^{-1} \tilde \Sigma_P D_P^{-1}$ and $\Sigma_Q=D_Q^{-1} \tilde \Sigma_Q D_Q^{-1}$, and note that both are diagonal since  $D_Q$ and $D_P$ are diagonal. Define $T= D_P^{-1} \tilde T_P$, so then also $T=D_Q \tilde T_Q$. It follows that $T P T^{\top} = \Sigma_P$ and $T^{-\top} Q T^{-1}=\Sigma_Q$. $\hfill \Box$

\section{Proof of Lemma \ref{lem:cp-trans}}
\label{app:cp-trans} In this part, we show that a completely passive system after a unitary symplectic transformation remains completely passive. Let $T \in \mathbb{R}^{2n}$ be unitary symplectic and let $\tilde x=T x$, with $\tilde x=(\tilde q_1, \tilde p_1, \ldots, \tilde q_n, \tilde p_n)^{\top}$. Since $T$ is symplectic, the operators $\tilde q_1, \tilde p_1, \ldots, \tilde q_n, \tilde p_n$ satisfy the same canonical  commutation relations as $q_1, p_1, \ldots, q_n, p_n$. Define the annihilation operators $\tilde a_i=\frac{1}{2}(\tilde q_i + \imath \tilde p_i)$, $i=1,2,\ldots,n$ and let $\tilde a =(\tilde a_1,\tilde a_2,\ldots,\tilde a_n)^{\top}$. Also define $D(a) = [\begin{array}{cc} a^{\top} & a^{\dag} \end{array}]^{\top}$ and $D(\tilde a) = [\begin{array}{cc} \tilde a^{\top} & \tilde a^{\dag} \end{array}]^{\top}$. We can write 
$$
D(\tilde a) = [\begin{array}{cc} \Sigma^{\top} & \Sigma^{\dag}\end{array}]^{\top}\tilde x= [\begin{array}{cc} \Sigma^{\top} & \Sigma^{\dag}\end{array}]^{\top}Tx,
$$ 
with 
$$
\Sigma=\frac{1}{2}\left[\begin{array}{ccccccc} 1 & \imath & 0 & 0 & \ldots & 0 & 0 \\
0 & 0 & 1 & \imath & \ldots & 0 & 0 \\
\vdots & \vdots & \vdots & \vdots & \ddots & \vdots & \vdots \\
0 & 0 & 0 & 0 & \ldots & 1 & \imath \end{array} \right].
$$
Since  $\sqrt{2} [\begin{array}{cc} \Sigma^{\top} & \Sigma^{\dag}\end{array}]^{\top}$) is unitary (see, e.g., \cite{Nurd10b} we have that 
$$
[\begin{array}{cc} \Sigma^{\top} & \Sigma^{\dag}\end{array}]^{-\top}=2 [\begin{array}{cc} \Sigma^{\dag} & \Sigma^{\top}\end{array}],
$$
and therefore, since $\tilde x = [\begin{array}{cc} \Sigma^{\top} & \Sigma^{\dag}\end{array}]^{-\top} D(a) = 2 [\begin{array}{cc} \Sigma^{\dag} & \Sigma^{\top}\end{array}] D(a)$,
$$
D(\tilde a) = 2[\begin{array}{cc} \Sigma^{\top} & \Sigma^{\dag}\end{array}]^{\top}T[\begin{array}{cc} \Sigma^{\dag} & \Sigma^{\top}\end{array}]D(a),
$$
The matrix $W=2[\begin{array}{cc} \Sigma^{\top} & \Sigma^{\dag}\end{array}]^{\top}T[\begin{array}{cc} \Sigma^{\top} & \Sigma^{\dag}\end{array}]$ is necessarily Bogoliubov \cite{GJN10}, but it is also complex unitary since $T$ is real unitary and $\sqrt{2}[\begin{array}{cc} \Sigma^{\top} & \Sigma^{\dag}\end{array}]^{\top}$ and $\sqrt{2}[\begin{array}{cc} \Sigma^{\dag} & \Sigma^{\top}\end{array}]$ are both unitary.  In particular, $W$ has the doubled up form \cite{GJN10}
$$
W=\left[\begin{array}{cc} W_1 & W_2 \\ W_2^{\#} & W_1^{\#} \end{array} \right],
$$
for some matrices $W_1,W_2 \in \mathbb{C}^{n \times n}$. Since $W$ satisfies $WW^{\dag}=I=W^{\dag}W$ (unitarity) and $W^{\dag}{\rm diag}(I,-I)W={\rm diag}(I,-I)=W{\rm diag}(I,-I)W^{\dag}$ (the Bogoliubov property \cite{GJN10}), it follows by straightforward algebra that
$W_1^{\dag}W_1 +W_2^{\top}W_2^{\#} = I$,  $W_1^{\dag}W_2+W_2^{\top}W_1^{\#}=0$, $W_1^{\dag}W_1-W_2^{\top}W_2^{\#}=I$, and $W_1^{\dag}W_2-W_2^{\top}W_1^{\#} = 0$, implying that $W_2=0$ and $W_1$ is unitary. That is, $W= {\rm diag}(W_1,W_1^{\#})$. Therefore, it follows that $\tilde a = W_1 a \Leftrightarrow a = W_1^{\dag} \tilde a$. Since the system was originally completely passive with Hamiltonian $H= \frac{1}{2} a^{\dag}\tilde R a$ and coupling vector $L= \tilde K a$, the transformed system after the application of $T$ has Hamiltonian operator $\tilde H = \frac{1}{2}\tilde a^{\dag} (W_1 \tilde R W_1^{\dag}) \tilde a$ and $\tilde L = (\tilde K W_1^{\dag})\tilde a$. Since $D$ is unchanged when $T$ is applied, the form of $\tilde H$, $\tilde L$, and $D$ implies that the transformed system is again completely passive.  $\hfill \Box$

\section{Proof of Theorem \ref{thm:cp-qbr}}
\label{app:cp-qbr} The proof will be split into three main parts: Parts A, B, and C.

{\bf Part A}. Note that due to the diagonal form of $R_{jk}$ (see Section \ref{sec:cp-systems-bt}) we can straightforwardly verify that $\mathbb{J}_n R + (\mathbb{J}_n R)^{\top}=0$. Moreover, from the physical realizability criterion we also have that
$$
4\mathbb{J}_n \Im\{K^{\dag}K\}\mathbb{J}_n + B\mathbb{J}_m B^{\top}=0.
$$
If $B$ satisfies $B=\mathbb{J}_n B \mathbb{J}_m^{\top}=-\mathbb{J}_n B \mathbb{J}_m$ then we get that 
\begin{equation}
\left. \begin{array}{c} 4\mathbb{J}_n \Im\{K^{\dag}K\} + B B^{\top} =0 \\
 4 \Im\{K^{\dag}K\} \mathbb{J}_n + B B^{\top} =0 \end{array} \right\} \Leftrightarrow 2  \mathbb{J}_n \Im\{K^{\dag}K\}+ 2  \Im\{K^{\dag}K\} \mathbb{J}_n + BB^{\top}=0. \label{eq:K-B-id}
\end{equation}
 Using the fact that $A=2\mathbb{J}_n(R+\Im\{K^{\dag}K\})$ and $\mathbb{J}_n R + (\mathbb{J}_n R)^{\top}=0$, (\ref{eq:K-B-id}) implies that $A+A^{\top} + BB^{\top}=0$. That is, if $-B=\mathbb{J}_n B \mathbb{J}_m$ then the Lyapunov equation $AP +PA^{\top} + BB^{\top} =0$ has the unique solution $P=I$, uniqueness following from the assumption that $A$ is Hurwitz. Now, it is a straightforward  exercise to verify from the form of $B$ given in Sec. \ref{sec:cp-systems-bt} for a completely passive linear quantum stochastic system that indeed $B=-\mathbb{J}_n B \mathbb{J}_m$. We conclude that for a completely passive system with $A$ Hurwitz the controllability Gramian is $P=I$.

{\bf Part B.} For a completely passive system the matrix $K$ in the coupling vector $L=Kx$ has the special form
$$
K=\left[\begin{array}{ccccccccc} M_1 & \imath M_1 & M_2 & \imath M_2 & \ldots & M_{n-1} & \imath M_{n-1} & M_n & \imath M_n \end{array} \right],
$$
for some column vectors $M_i \in \mathbb{C}^{m}$, $i=1,2,\ldots,n$. From this structure, direct inspection shows that $\Im\{K^{\dag}K\}$ has the block form $\Im\{K^{\dag}K\}=[Z_{ij}]_{i=1,2,\ldots,n,j=1,2,\ldots,m}$, with $Z_{ij}$ real $2 \times 2$ matrices of the special form
$$
Z_{ij} = \left[\begin{array}{cc} z_{1,ij} & z_{2,ij} \\ -z_{2,ij} & z_{1,ij}\end{array}\right].
$$
From this block structure and the block structure of $\mathbb{J}_n$ we have that $\Im\{K^{\dag}K\} \mathbb{J}_n - \mathbb{J}_n \Im\{K^{\dag}K\}=0$. Using this identity and the  property of $\mathbb{J}_nR$ exploited in Part A, it follows that $\mathbb{J}_n A \mathbb{J}_n  + A=0 \Leftrightarrow \mathbb{J}_n A \mathbb{J}_n   = - A$. 

Let us proceed to consider the case where $D=[\begin{array}{cc} I_{n_y} & 0_{n_y \times (2m-n_y)}\end{array}]$ (if $n_y=2m$ then $D=I_{2m}$). Consider the Lyapunov equation $A^{\top}Q+QA +C^{\top}C=0$. Since $C^{\top}=-\mathbb{J}_n B \mathbb{J}_m D^{\top}$ (by  physically realizability of the system) we have that $C^{\top}C=\mathbb{J}_n B \mathbb{J}_m D^{\top} D  \mathbb{J}_m B^{\top} \mathbb{J}_n$. 
Also, due to the special form assumed for $D$ we have that $D^{\top}\mathbb{J}_{n_y/2} = \mathbb{J}_m D^{\top}$ and it follows that 
\begin{eqnarray*}
\mathbb{J}_n C^{\top} C \mathbb{J}_n &=&\mathbb{J}_n B D^{\top} D B^{\top} \mathbb{J}_n,\\
&=& \mathbb{J}_n B  D^{\top} \mathbb{J}_{n_y/2} \mathbb{J}_{n_y/2}^{\top} D B^{\top} \mathbb{J}_n,\\
&=& -\mathbb{J}_n B \mathbb{J}_m D^{\top}  D \mathbb{J}_m B^{\top} \mathbb{J}_n,\\
&=& -C^{\top}C.
\end{eqnarray*}
Now consider the Lyapunov equation $A^{\top} Q+ Q A + C^{\top} C=0$. By multiplying this equation on the left and the right by $\mathbb{J}_n$ this equation can be rewritten as the Lyapunov equation $(\mathbb{J}_nA\mathbb{J}_n)^{\top} \bar{Q} + \bar{Q} (\mathbb{J}_nA\mathbb{J}_n) + \mathbb{J}_n C^{\top}C \mathbb{J}_n=0$, with $\bar{Q}=-\mathbb{J}_n Q \mathbb{J}_n$. Using the facts established earlier that $\mathbb{J}_n A \mathbb{J}_n=-A$ and $\mathbb{J}_n C^{\top} C \mathbb{J}_n=-C^{\top}C$, we see that the Lyapunov equation may be rewritten as $A^{\top} \bar{Q} + \bar{Q} A + C^{\top} C=0$. That is, $Q$ and $\bar{Q}$ are solutions of the same Lyapunov equation. Since $A$ is Hurwitz, the solution to this equation is unique and therefore $Q= \bar{Q} \Leftrightarrow Q=-\mathbb{J}_n Q \mathbb{J}_n$. Since we have established that $P=I$, we thus conclude that $[\mathbb{J}_nP,Q\mathbb{J}_n]=\mathbb{J}_n Q\mathbb{J}_n +Q=0$ when $D=[\begin{array}{cc} I_{n_y} & 0 \end{array}]$. Moreover, note in passing that since $\mathbb{J}_n P$ is diagonalizable and $Q\mathbb{J}_n$ commutes with $\mathbb{J}_n P$, we have that $Q\mathbb{J}_n$ is also diagonalizable  and therefore so is $\mathbb{J}_n Q$.

{\bf Part C.} Now, consider the general case where there exists a matrix $E$ such that the square matrix $\tilde D=[\begin{array}{cc} D^{\top} & E^{\top} \end{array}]^{\top}$ is unitary and symplectic. We note that the unitarity of $\tilde D$ implies that $DE^{\top}=0$ and $\tilde D^{-1}=\tilde D^{\top}$. Also, the sympletic property of $\tilde D$ implies that $\tilde D^{-1}$ and $\tilde D^{\top}$ are symplectic. Define $\tilde B = B \tilde D^{-1}=B\tilde D^{\top}$.  Then we have that $\tilde B \tilde B^{\top} = BB^{\top}$ and $\tilde B \mathbb{J}_n \tilde B^{\top}= B \mathbb{J}_n B^{\top}$. It follows from this that $P=I$ is also the unique solution to the Lyapunov equation $A P+ P A + \tilde B \tilde B^{\top}=0$ and, since the system is physically realizable, $A \mathbb{J}_n + \mathbb{J}_n A^{\top} + \tilde B \mathbb{J}_m \tilde B^{\top}=0$. Let $D_0=[\begin{array}{cc} I_{n_y} & 0_{n_y \times (2m-n_y)} \end{array}]$. We now show that $\mathbb{J}_n  C^{\top} = \tilde B \mathbb{J}_m D_0^{\top}$. Indeed, we have
\begin{eqnarray*}
\mathbb{J}_n C^{\top} = B\mathbb{J}_m D^{\top}=\tilde B \tilde D \mathbb{J}_m D^{\top} = \tilde B (\tilde D \mathbb{J}_m  \tilde D^{\top}) \tilde D  D^{\top} = \tilde B \mathbb{J}_m D_0^{\top},
\end{eqnarray*}
where the last equality follows from the fact that $\tilde D  D^{\top}=D_0$ (by the unitarity of $\tilde D$). We  thus conclude that the system $\tilde G$ with system matrices $(A,\tilde B,C,D_0)$ is a physically realizable linear quantum stochastic system whose  controllability Gramian $P=I_{2n}$ and observability Gramian $Q$ coincides with the original system  $G$ with system matrices $(A,B,C,D)$.  Due to the special form of $D_0$, we conclude from Part B that $[\mathbb{J}_n P,Q\mathbb{J}_n]=0$. 

Finally, that the quasi-balancing transformation $T$ can be obtained by applying Theorem \ref{thm:sym-diag} to $Q$ such that $T^{-\top}Q T^{-1}=\Sigma_Q$ follows from the fact that $[\mathbb{J}_nP,Q\mathbb{J}_n]=0$ along the lines of the proof of Point 2 of Theorem \ref{thm:bt-q-lin}. Moreover, that $T$ is also unitary follows from the observation that $TT^{\top}=I$, since $P=I$ and $\Sigma = I$ (i.e., all the symplectic eigenvalues of $P$ are ones). Also, by Lemma \ref{lem:cp-trans}, the quasi-balanced realization obtained after applying $T$ is again complete passive. Therefore, from Lemma \ref{lem:cp-preservation} it now follows that the  reduced system obtained after applying subsystem truncation  is completely passive. $\hfill \Box$

\bibliographystyle{ieeetran}
\bibliography{ieeeabrv,rip,mjbib2004,irpnew}

\end{document}